\xpatchcmd{\linklayers}{\nn@lastnode}{\lastnode}{}{}
\xpatchcmd{\linklayers}{\nn@thisnode}{\thisnode}{}{}
\definecolor {processblue}{cmyk}{0.96,0,0,0}
\newcommand{\trp}{\intercal}
\theoremstyle{definition}
\newtheorem{assumption}{Assumption}
\newtheorem{remark}{Remark}
\theoremstyle{plain}
\newtheorem{theorem}{Theorem}
\newtheorem{proposition}{Proposition}
\theoremstyle{remark}
\begin{document}
\bstctlcite{BSTcontrol}

\title{Large-Scale Graph Reinforcement Learning in Wireless Control Systems}

\author{Vinicius Lima$^{1}$, Mark Eisen$^{2}$, Konstantinos Gatsis$^{3}$ and Alejandro Ribeiro$^{1}$% 
	\thanks{Supported by Intel Science and Technology Center for Wireless Autonomous Systems and ARL DCIST CRA W9111NF-17-2-0181. %\newline
		$^{1}$ V. Lima and A. Ribeiro are with the  Department of Electrical and Systems Engineering, University of Pennsylvania, Philadelphia, PA 19104 USA (e-mail:  vlima@seas.upenn.edu; aribeiro@seas.upenn.edu). 
		$^{2}$ M. Eisen is with Intel Corporation, Hillsboro, OR (e-mail: mark.eisen@intel.com). %\newline
		$^{3}$ K. Gatsis is with the Department of Engineering Science, University of Oxford, Parks Road, Oxford, OX1 3PJ, UK (e-mail: konstantinos.gatsis@eng.ox.ac.uk). 
Preliminary results were presented at the 2020 21st IFAC World Congress \cite{lima_2020_ifac}, which are here expanded upon by generalizing the formulation to include long-term constraints, establishing the permutation invariance and transference of graph reinforcement learning allocation policies in wireless control systems, and presenting extended numerical experiments.
}
}

\maketitle

%%%%%%%%%%%%%%%%%%%%%%%%%%%%%%%
%%%%%% A B S T R A C T  %%%%%%%
%%%%%%%%%%%%%%%%%%%%%%%%%%%%%%%
\begin{abstract}
	Modern control systems routinely employ wireless networks to exchange information between spatially distributed plants, actuators and sensors. With wireless networks  defined by random, rapidly changing transmission conditions that challenge assumptions commonly held in the design of control systems, proper allocation of communication resources is essential to achieve reliable operation.  
	Designing resource allocation policies, however, is challenging, motivating recent works to successfully exploit deep learning and deep reinforcement learning techniques to design resource allocation and scheduling policies for wireless control systems (WCSs). As the number of learnable parameters in a neural network grows with the size of the input signal, deep reinforcement learning may fail to scale, limiting the immediate generalization of such scheduling and resource allocation policies to large-scale systems. The interference and fading patterns among plants and controllers in the network, however, induce a time-varying graph that can be used to construct policy representations based on graph neural networks (GNNs), with the number of learnable parameters now independent of the number of plants in the network. 
	We further establish in the context of WCSs that, due to inherent invariance to graph permutations, the GNN is able to model scalable and transferable resource allocation policies, which are subsequently trained with primal-dual reinforcement learning. 
    Numerical experiments show that the proposed graph reinforcement learning approach yields policies that not only outperform baseline solutions and deep reinforcement learning based policies in large-scale systems, but that can also be transferred across networks of varying size.
\end{abstract}
\begin{IEEEkeywords}
	Wireless Control Systems, Resource Allocation, Graph Neural Networks, Constrained Reinforcement Learning.
\end{IEEEkeywords}

%%%%%%%%%%%%%%%%%%%%%%%%%%%%%%%
%%%%%%%%%%%%%%%%%%%%%%%%%%%%%%%
%%%%%%%%%%%%%%%%%%%%%%%%%%%%%%%

\IEEEpeerreviewmaketitle

%%%%%%%%%%%%%%%%%%%%%%%%%%%%%%%
%%% SECTION : Introduction  %%%
%%%%%%%%%%%%%%%%%%%%%%%%%%%%%%%
\section{Introduction}
%!TEX root = GNNAllocationWCS.tex

%%%%%%%%%%%%%%%%%%%%%%%%%%%
%%% SECTION : INTRODUCTION %%%%
%%%%%%%%%%%%%%%%%%%%%%%%%%%

Modern control systems routinely employ wireless networks to exchange information between spatially distributed plants, actuators and sensors. The use of wireless networks instead of wired communication makes the installation of components easier and maintenance more flexible, but also adds particular challenges to the design of control and communication policies  \cite{hespanha_survey_2007, park_wireless_2018}. Transmission conditions in wireless channels vary rapidly  and information packets addressed to different users are further subject to destructive interference \cite{tse_2005}. These issues can be alleviated by designing proper allocation policies to distribute limited communication resources across devices in the network. 
Such resource allocation problems usually consist in optimizing some metric of interest while satisfying constraints on resource utilization, cf. e.g. \cite{yaxin_li_scheduling_2001,fattah_leung_scheduling_wl_2002,Eryilmaz_srikant_2007}. 
The nonconvexity and infinite dimensionality of resource allocation problems \cite{Ribeiro2012}, however, makes these problems difficult to solve and motivates the use of heuristic or approximate solutions. A recent focus is the use of deep learning to solve resource allocation in wireless systems \cite{eisen_learning_2018, liang_opc_dnns, liang2019deep, zappone2019wireless}. 

In networked \emph{control} systems, the design of control policies under general communication models has been extensively studied \cite{walsh_stability_2002, nesic_input-output_2004, tabbara_input-output_2005, zhang_recent_advances_NCSs_2016}, as has the problem of resource allocation and user scheduling in WCSs \cite{rehbinder_scheduling_2004,mo_sensor_2011,shi_optimal_2011, gatsis_opportunistic_2015}.
As in the conventional wireless setting, finding optimal resource allocation policies in WCSs usually leads to an intractable optimization problem, motivating the use of heuristics or approximations. 
In place of model-based heuristics, recent efforts have utilized deep reinforcement learning to design resource allocation in WCSs, cf., e.g., \cite{charalambous_resource_2017, demirel_deepcas:_2018,leong_deep_2018,baumann_drl_etc, redder2019deep,lima2020optimal_tsp}.
The scheduling algorithms proposed in \cite{demirel_deepcas:_2018} and \cite{leong_deep_2018}, for example, rely on Deep Q-Network (DQN), a value-based deep RL algorithm. 
An actor-critic algorithm for communication and control  in WCSs is proposed in \cite{baumann_drl_etc}, whereas a constrained reinforcement learning approach to find feasible resource allocation and control policies in WCSs subject to long-term constraints is discussed in \cite{lima2020optimal_tsp}. 

Although learned policies outperform hand-crafted heuristics in a variety of WCSs, they do so in systems that involve small numbers of plants and controllers. This is because they rely on fully connected neural networks, which are difficult to train when the number of input and output variables is large. In the context of wireless communications it has been shown that graph neural networks (GNNs) can leverage the structure of interference patterns to solve resource allocation problems with hundreds of terminals \cite{eisen2019optimal_gnns, liang_graph_2018, shen2020graph, lee2019graph}. The main goal of this paper is to show that analogous success is attainable in WCSs. Specifically, our main contribution is the following:

\begin{list}
      {}
      {\setlength{\labelwidth}{20pt}
       \setlength{\labelsep}{-3pt}
       \setlength{\itemsep}{0pt}
       \setlength{\leftmargin}{20pt}
       \setlength{\rightmargin}{0pt}
       \setlength{\itemindent}{0pt} 
       }

\item [(C) \;] We introduce a constrained graph reinforcement learning approach to design scalable, feasible resource allocation policies in WCSs. 

\end{list}

This approach relies on the use of the time-varying interference \emph{graph} that describes the state of the communication network to construct policies parametrized by GNNs (Section \ref{sec:gnns_perm_inv}). Incorporating the underlying structure of the WCS into the policy parametrization makes the number of learnable parameters independent of the number of agents, which is fundamental to achieve scalability. Policies parametrized by GNNs are incorporated in a \emph{constrained} reinforcement learning problem whose objective is to minimize a discounted quadratic control cost and whose constraints represent the consumption of communication resources (Section \ref{sec:optimal_resource_formulation}). Policies are subsequently trained in a model-free manner using a primal-dual algorithm that alternates between updates of policy parameters via reinforcement learning iterations and updates of dual variables enforcing constraint satisfaction via gradient descent (Section \ref{sec:res_allocation_RL}). Our numerical results show that it is possible to train policies for WCSs with tens of plants and controllers (Section \ref{sec:num_exp}).

To explain the success of GNNs in learning policies for WCS we explore their transference properties. Our first investigation concerns the effects of node labeling. We therefore consider systems whose respective plant dynamics and wireless channel statistics are permutations of each other (Section \ref{subsec:ra_wcss_invariance}). Comparing the corresponding learning problems leads to the first property of GNNs that we establish

\begin{list}
      {}
      {\setlength{\labelwidth}{26pt}
       \setlength{\labelsep}{-3pt}
       \setlength{\itemsep}{0pt}
       \setlength{\leftmargin}{26pt}
       \setlength{\rightmargin}{0pt}
       \setlength{\itemindent}{0pt} 
       }

\item [(P1) \;] We prove that a GNN policy is optimal for a constrained graph reinforcement learning problem if and only if it is optimal for any other problem in which respective plant dynamics and wireless channel statistics are permutations of each other (Theorem \ref{theo:filter_invariance} in Section \ref{subsec:ra_wcss_invariance}).

\end{list}

This property means that parameters that are optimal for a given WCS are also optimal for all of its permutations. That is, that optimal GNN policies are permutation invariant. This fact facilitates transference of learned resource allocation policies across similar networks. This is an important property because we use different networks during training and execution. This implies that trained GNN are evaluated in networks that are different from those in which they were trained. System realizations for training and testing are drawn with the same statistics but their specific realizations are different. Property (P1) implies that if the networks that we draw are not far from permutations of each other we must observe successful transference of trained GNNs. Our numerical experiments corroborate that this is true (Sections \ref{sec_numericals_cellular} and \ref{sec_numericals_adhoc}).

Our second investigation concerns transference \emph{at scale}. This idea concerns the execution of a trained GNN policy in a WCS that is much larger than the WCS in which it was trained:

\begin{list}
      {}
      {\setlength{\labelwidth}{26pt}
       \setlength{\labelsep}{-3pt}
       \setlength{\itemsep}{0pt}
       \setlength{\leftmargin}{26pt}
       \setlength{\rightmargin}{0pt}
       \setlength{\itemindent}{0pt} 
       }

\item [(P2) \;] We numerically demonstrate that learned policies can be successfully transferred across networks of varying size (Section \ref{subsec:transference_sims}). 

\end{list}

That is, one can learn an allocation policy for a given WCS, and successfully deploy that policy on larger systems. We specifically demonstrate training in networks with tens of nodes that we successfully transfer to networks with hundreds of nodes.

%%%%%%%%%%%%%%%%%%%%%%%%%%%%%%%%%%%%%%
%%% SECTION : Problem Formulation  %%%
%%%%%%%%%%%%%%%%%%%%%%%%%%%%%%%%%%%%%%
\section{Resource Allocation in WCSs}
\label{sec:optimal_resource_formulation}
%!TEX root = GNNAllocationWCS.tex

%%%%%%%%%%%%%%%%%%%%%%%%%%%
%%% SECTION : Problem formulation %%%%
%%%%%%%%%%%%%%%%%%%%%%%%%%%

Consider a system made up by $m$ independent control loops sharing a common wireless medium to communicate with remote base stations --- which in turn communicate with each other over a wired connection, as shown in Figure \ref{fig:communication_structure}. Each control plant $i$ can be described by a discrete, time-invariant model $f^{(i)}(\cdot, \cdot) \, : \, \mathbb{R}^{p} \times \mathbb{R}^q \to \mathbb{R}^p $ mapping  a current state vector $x_t^{(i)} \in \mathbb{R}^p$ and corresponding control input $u_t^{(i)} \in \mathbb{R}^q$ to the next state of the system. Furthermore, each of those plants is affected by some zero-mean random noise $w^{(i)}_t \in \mathbb{R}^p$ with covariance matrix $W^{(i)} \in \mathbb{R}^p$  standing for eventual disturbances and unmodeled dynamics, leading to
\begin{equation}
x^{(i)}_{t + 1} = f^{(i)}(x_t^{(i)}, u_t^{(i)}) + w^{(i)}_t. 
\label{eq:ind_plant_dyn}
\end{equation}
Actuation signals $g(x_t^{(i)})$ are computed at the corresponding remote controller co-located with the base stations, based on observations of states sent by the plants,
\begin{equation}
\tilde{x}^{(i)}_{t} = x^{(i)}_{t} + w_{o, t}^{(i)},
\label{eq:observation_noise}
\end{equation}
with $w_{o, t}^{(i)} \in \mathbb{R}^p$ representing a zero-mean observation noise with covariance matrix $W_{o}^{(i)} \in \mathbb{R}^p$.
As wireless networks are prone to packet drops, the control signal $u^{(i)}_t$ \textit{received} by plant $i$ will depend on the channel transmission conditions. If transmission conditions are favorable and the transmission of the packet sent by the remote controller is successful, the plant executes its intended control action. When transmission fails, however, we assume the plant does not execute any action. 

In other words, significant noise in the wireless channel will cause the system to operate in open loop, in which it cannot be directly controlled. The probability of closing the feedback loop will depend upon the current \emph{fading state} in the wireless channel, the \emph{transmission power} allocated to the control signal sent by the base station,  as well as \emph{interference} caused by transmissions made by  other control loops sharing the wireless medium. We may directly or indirectly control power and interference levels in response to the changes in the fading state, which is itself out of our control.
%
%
%%%%%%%%%%%%%%%%%%%%%%%%%%%%%%%%%%%%%%%%%%%%%%%%%%%%%%%%%%%
%%%%%%%%%%%%%%%%%%%%%%%%% FIGURE %%%%%%%%%%%%%%%%%%%%%%%%%%
%%%%%%%%%%%%%%%%%%%%%%%%%%%%%%%%%%%%%%%%%%%%%%%%%%%%%%%%%%%
\tikzstyle{block} = [draw, fill=blue!20, rectangle, 
    minimum height=1cm, minimum width=5em]
\tikzstyle{largerblock} = [draw, fill=blue!20, rectangle, 
    minimum height=1cm, minimum width=18em]
\tikzstyle{sum} = [draw, fill=blue!20, circle, node distance=1cm]
\tikzstyle{input} = [coordinate]
\tikzstyle{output} = [coordinate]
\tikzstyle{pinstyle} = [pin edge={to-,thin,black}]
\tikzset{edge/.style = {->,> = latex'}}
%\captionsetup[figure]{font={small}}
\begin{figure*}
  \centering
  \begin{tikzpicture}[scale=.9,
	path/.style={
		->,
		>=stealth,
	},
	decoration={
		markings,
		mark=at position 0.59cm*1 with {\arrow[black]{stealth}},
		mark=at position 0.59cm*2 with {\arrow[black]{stealth}},
		mark=at position 0.59cm*3 with {\arrow[black]{stealth}},
		mark=at position 0.59cm*4 with {\draw (0,0) circle;},
		mark=at position 0.6cm*5-0.01cm with {\arrow[black]{stealth}},
		mark=at position 6mm*6-0.1 with {\arrow[black]{stealth}},
		mark=at position 7*6mm-0.1 with {\arrow[black]{stealth}},
		mark=at position 8*6mm-0.1 with {\draw (0,0) circle;},
		mark=at position 9*6mm-0.1 with {\draw (0,0) circle;},
		mark=at position 10*6mm-0.1 with {\draw (0,0) circle;},
		mark=at position 11*6mm-0.1 with {\arrow[black]{stealth}},
		mark=at position 12*6mm-0.1 with {\arrow[black]{stealth}},
		mark=at position 13*6mm-0.1 with {\arrow[black]{stealth}},
		mark=at position 14*6mm-0.1 with {\draw (0,0) circle;},
		mark=at position 15*6mm-0.1 with {\arrow[black]{stealth}},
		mark=at position 16*6mm-0.1 with {\arrow[black]{stealth}},
		mark=at position 17*6mm-0.1 with {\arrow[black]{stealth}},
		mark=at position 18*6mm-0.1 with {\arrow[black]{stealth}},
		mark=at position 19*6mm-0.1 with {\draw (0,0) circle;},
		mark=at position 20*6mm-0.1 with {\arrow[black]{stealth}},
		mark=at position 21*6mm-0.1 with {\arrow[black]{stealth}},
		mark=at position 22*6mm-0.1 with {\arrow[black]{stealth}},
		mark=at position 23*6mm-0.1 with {\draw (0,0) circle;},
	}, radius=1pt,]

\foreach \x/\xtext in {300/2}
	\draw (\x:2.4) node {};
	
	\node[circle,draw=black, fill=red!20, inner sep=0pt,minimum size=15pt ] (receiver) at (-6,0) {$BS_1$};
	
	\node[circle,draw=black, fill=red!20, inner sep=0pt,minimum size=15pt] (receiver2) at (5,0) {$BS_n$};
	
	\node[circle,draw=black, fill=red!20, inner sep=0pt,minimum size=15pt] (receiver3) at (-1,0) {$BS_2$};
	
	\node[circle,draw=black, fill=blue!20, inner sep=0pt,minimum size=10pt] (plant1) at (-7.5,1.5) {$p^1_1$};
	
	\node[circle,draw=black, fill=blue!20, inner sep=0pt,minimum size=10pt] (plant2) at (-7.5,-1.5) {$p^1_2$};
	
\node[circle,draw=black, fill=blue!20, inner sep=0pt,minimum size=10pt] (plant3) at (-4.5,1.5) {$p^1_k$};

\node[circle,draw=black, fill=blue!20, inner sep=0pt,minimum size=10pt] (plant4) at (-4.5,-1.5) {$p^1_3$};
	
	\node[circle,draw=black, fill=blue!20, inner sep=0pt,minimum size=10pt] (plantm1) at (3.5,1.5) {$p^n_1$};
	
	\node[circle,draw=black, fill=blue!20, inner sep=0pt,minimum size=10pt] (plantm) at (6.5,-1.5) {$p^n_3$};
	
		\node[circle,draw=black, fill=blue!20, inner sep=0pt,minimum size=10pt] (plantm2) at (3.5,-1.5) {$p^n_2$};
	
	\node[circle,draw=black, fill=blue!20, inner sep=0pt,minimum size=10pt] (plantm3) at (6.5,1.5) {$p^n_k$};
	
		\node[circle,draw=black, fill=blue!20, inner sep=0pt,minimum size=10pt] (plant21) at (-2.5,1.5) {$p^2_1$};
	
	\node[circle,draw=black, fill=blue!20, inner sep=0pt,minimum size=10pt] (plant22) at (.5,-1.5) {$p^2_3$};
	
	\node[circle,draw=black, fill=blue!20, inner sep=0pt,minimum size=10pt] (plant23) at (-2.5,-1.5) {$p^2_2$};
	
	\node[circle,draw=black, fill=blue!20, inner sep=0pt,minimum size=10pt] (plant24) at (.5,1.5) {$p^2_k$};
	
	\path[thick](receiver) edge (receiver3);
	
	\draw[edge](plant1) to[bend left=15] 
	node[above right, pos=0.3]{$x^{(1)}$} (receiver);
	\draw[edge, dashed](receiver) to[bend left=15] 
	node[left, pos=0.1]{$u^{(1)}$} (plant1);
	
	\draw[edge](plant2) to[bend left=15] 
	node[left, pos=0.2]{$x^{(2)}$} (receiver);
	\draw[edge, dashed](receiver) to[bend left=15]  
	node[right, pos=0.6]{$u^{(2)}$} (plant2);
	
	\draw[edge](plant3) to[bend left=15] 
	node[right, pos=0.4]{$x^{(k)}$} 
	(receiver);
	\draw[edge, dashed](receiver) to[bend left=15]  
	 node[left, pos=0.6]{$u^{(k)}$} (plant3);
	 
	 \draw[edge](plant4) to[bend left=15] 
	 node[left, pos=0.2]{$x^{(3)}$} 
	(receiver);
	\draw[edge, dashed](receiver) to[bend left=15]  
	 node[above right, pos=0.6]{$u^{(3)}$} (plant4);
	
	\node[circle, draw=none] (auxrec2) at (1.5,0) {};
	\node[circle, draw=none] (auxrec3) at (2.5,0) {};
	\node[circle, draw=none] (auxrec1) at (-7.5,0) {};
	\node[circle, draw=none] (auxrecm) at (6.5,0) {};
\path (auxrec2) -- node[auto=false]{\ldots} (auxrec3);	
\path[thick](receiver3) edge (auxrec2);
\path[thick](auxrec3) edge (receiver2);
\path[thick](receiver2) edge (auxrecm);
\path[thick](auxrec1) edge (receiver);
	
	\draw[edge](plantm1) to[bend left=15] 
	node[above right, pos=0.2]{$x^{(nk -3)}$} (receiver2);
	\draw[edge, dashed](receiver2) to[bend left=15]  
	 node[below left, pos=0.6]{$u^{(nk-3)}$} (plantm1);
	
	\draw[edge](plantm3) to[bend left=15] 
	node[right, pos=0.2]{$x^{(nk)}$} (receiver2);
	\draw[edge, dashed](receiver2) to[bend left=15]  
	node[left, pos=0.6]{$u^{(nk)}$} (plantm3);
	
	\draw[edge](plantm) to[bend left=15] 
	node[ left, pos=0.2]{$x^{(nk-1)}$} (receiver2);
	\draw[edge, dashed](receiver2) to[bend left=15]  
	node[right, pos=0.6]{$u^{(nk-1)}$} (plantm);
	
	\draw[edge](plantm2) to[bend left=15] 
	node[left, pos=0.5]{$x^{(nk-2)}$} (receiver2);
	\draw[edge, dashed](receiver2) to[bend left=15]  
	node[below right, pos=0.9]{$u^{(nk-2)}$} (plantm2);
	
		\draw[edge](plant21) to[bend left=15] 
	node[above right, pos=0.2]{$x^{(2k -3)}$} (receiver3);
	\draw[edge, dashed](receiver3) to[bend left=15]  
	node[below left, pos=0.6]{$u^{(2k-3)}$} (plant21);
	
	\draw[edge](plant24) to[bend left=15] 
	node[right, pos=0.2]{$x^{(2k)}$} (receiver3);
	\draw[edge, dashed](receiver3) to[bend left=15]  
	node[left, pos=0.6]{$u^{(2k)}$} (plant24);
	
	\draw[edge](plant22) to[bend left=15] 
	node[ left, pos=0.2]{$x^{(2k-1)}$} (receiver3);
	\draw[edge, dashed](receiver3) to[bend left=15]  
	node[right, pos=0.6]{$u^{(2k-1)}$} (plant22);
	
	\draw[edge](plant23) to[bend left=15] 
	node[left, pos=0.5]{$x^{(2k-2)}$} (receiver3);
	\draw[edge, dashed](receiver3) to[bend left=15]  
	node[below right, pos=0.9]{$u^{(2k-2)}$} (plant23);

\end{tikzpicture}
\caption{Wireless control system architecture. A collection of $m$ independent plants shares a wireless communication network to communicate with remote controllers co-located with the base stations, which regulate the communication between plants and controllers.}
\label{fig:communication_structure}
\end{figure*}
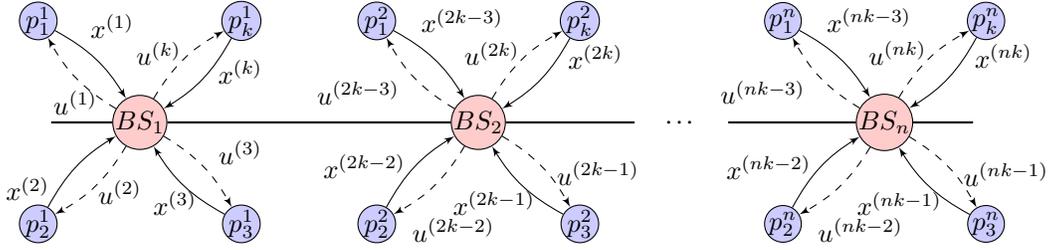
%%%%%%%%%%%%%%%%%%%%%%%%%%%%%%%%%%%%%%%%%%%%%%%%%%%%%%%%%%%
%%%%%%%%%%%%%%%%%%%%%%% END FIGURE %%%%%%%%%%%%%%%%%%%%%%%%
%%%%%%%%%%%%%%%%%%%%%%%%%%%%%%%%%%%%%%%%%%%%%%%%%%%%%%%%%%%

 More precisely, fading stands for the rapidly changing transmission conditions in a wireless channel \cite[ch. 2]{tse_2005}. Fading conditions in channel $i$ can be represented by a random variable  $h^{(ii)} \in \mathcal{H} \subseteq \mathbb{R}_+$ drawn from some probability distribution $\chi(h)$. Due to multiple plants communicating over the same wireless network, significant interference may be caused by concurrent transmissions. The fading state of the interference link between the base station associated with plant $i$ and the receiver associated with plant $j$ is given by random variables  $h^{(ij)}$. The overall fading conditions in the network can be aggregated in a matrix $H$ describing the state of the wireless network at a given point in time,
\begin{equation}
H = \begin{bmatrix}
h^{(11)} & \dots & h^{(1m)} \\
\vdots & \ddots & \vdots \\
h^{(m1)} & \dots & h^{(mm)} 
\end{bmatrix}.
\label{eq:interference_matrix}
\end{equation}

The reliability of the wireless channel does not depend only on the wireless network state, $H$, but also on the power levels used in each of the transmissions. 
  Based on the observations received from the plants \eqref{eq:observation_noise}, as well as current fading and interference conditions in the wireless network, summarized in the interference matrix, base stations compute a power allocation decision $\alpha_{t}$ sampled from an allocation policy parametrized by some parameter vector $\theta$,
  \begin{equation}
  \alpha_{t} = \alpha(\tilde{x}_t, H_t; \theta).
  \label{eq:uplink_param}
  \end{equation}

 We denote by $\alpha^{(i)} \coloneqq [ \alpha_{t} ]_i \in \reals_+$ a power level used in the transmission of the control packet sent to plant $i$.
In particular, for the control signal sent to plant $i$, we can define the signal to interference plus noise ratio (SINR)  $\xi^{(i)}: \reals_{+}^m \times \mathcal{H}^{m^2}  \rightarrow \mathbb{R}_+$, %given as
\begin{equation}
\xi^{(i)} = \frac{h^{(ii)}\alpha^{(i)}}{ \sigma^2 + \sum_{j \neq i} h^{(ij)}\alpha^{(j)} },
\label{eq:signal-to-noise-interference}
\end{equation}
where $\sigma^2$ is the variance of a standard AWGN channel. The SINR $\xi^{(i)}$ gives the \emph{effective} channel quality of the link between plant $i$ and its remote controller after the effects of transmission power and interference are taking into account. Observe that higher power levels increase the SINR in the direct link, but cause greater interference for other transmissions in the network. Mitigating the interference across many devices through power control remains an active problem in wireless communications \cite{shi_wmmse_2011, naderializadeh_itlinq_2014}.

From the control system standpoint, the effective channel quality as measured by the SINR directly specifies current performance of a wireless control loop. In particular, we define a general function
 $v: \reals_{+}^m \rightarrow [0,1]^m$ that, given the SINR $\xi^{(i)}$ in a wireless channel, returns the probability of  plants closing the feedback loop. 
  This is typically modeled as a sigmoidal-shaped function, but we do not assume such a model is known. As the SINR ultimately defines the probability of a plant successfully executing the prescribed control action, the  control signal \emph{received} by plant $i$ in \eqref{eq:ind_plant_dyn} is  defined as
\begin{equation}
 u^{(i)}_{t} = 
 \begin{cases}
    g(\tilde{x}_t^{(i)}), \text{ w.p. } v( \xi^{(i)}),  \\
    0, \text{ w.p. } 1 - v(\xi^{(i)}). 
 \end{cases}
 \label{eq:plant_switched_prob}
\end{equation}

Contrary to traditional wireless networking problems, the fundamental objective here is to keep plants operating around an equilibrium point --- assumed to be zero without loss of generality --- while respecting constraints on communication resources. 
The objective of the resource allocation problem can then be represented by a quadratic cost that penalizes large deviations of the plant states from the equilibrium point. 
Here, we search for a resource allocation policy $ \alpha(\tilde{x}_t, H_t; \theta) \, \, : \mathbb{R}^{m p} \times \mathbb{R}^{m^2} \rightarrow \mathcal{A} \subseteq \mathbb{R}_+^m$ that, given current channel conditions aggregated in $H_t := [h_t^{(11)}, h_t^{(1m)}, \hdots, h_t^{(mm)}]$ and estimates of plant states $\tilde{x}_t := [\tilde{x}_t^{(1)}, \hdots, \tilde{x}_t^{(m)}]$, returns an allocation vector $\alpha \in \mathcal{A} \subseteq \mathbb{R}^m_+$, where the set $\mathcal{A}$ defines the space of possible allocation decisions as given by the communication model.
The resource allocation problem consists then in finding the allocation policy that minimizes the cost of operating the plants over a finite horizon $T$ starting from some initial state $x_0$, while respecting constraints $l(\alpha): \mathbb{R}^m \to \mathbb{R}^r$ on the power used to send control signals back to the plants, 
\begin{equation}
\begin{aligned}
\theta := &\arg \min_{\theta} J(\theta) =  \mathbb{E}_{x_0}^{ \alpha(\cdot; \theta)} \left[ \sum_{t = 0}^T \gamma^t c(x_t)\right], \\
\text{s.t. } & L(\theta) = \mathbb{E} \left[ \sum_{t=0}^T \gamma^t l(\alpha)  \right] \leq 0, \\
& \alpha \in \mathcal{A}, 
\label{eq:constrained_optimal_prob}
\end{aligned}
\end{equation}
with $c(x_t) =  \sum_{i = 1}^m x_t^{(i)^T}Qx_t^{(i)} $ the one-step cost, $\gamma \in [0, 1]$ a discount factor and  $Q \geq 0$ a weight matrix. We note that, although performance of a controller is often measured by a quadratic cost that penalizes both large deviations from the equilibrium point and large control efforts, as in the classical linear quadratic control formulation, the control policy here is known a priori. Thus, there is no need to additionally penalize large control efforts in the quadratic cost given in \eqref{eq:constrained_optimal_prob}.

 The optimization problem in  \eqref{eq:constrained_optimal_prob} involves finding the resource allocation function $\alpha(x,H)$ that minimizes the operation cost of the plants while satisfying the resource constraints. As the optimization problem has infinite dimensionality, it is generally intractable to find optimal solutions even if we restrict our attention to systems with a low number of plants and with short optimization horizons. Moreover, finding an optimal policy directly in \eqref{eq:constrained_optimal_prob} necessarily requires explicit knowledge of the plant dynamics and communication models in \eqref{eq:plant_switched_prob}, which are often unavailable in practice.
The challenging nature of the problem %as well as the search for model-free allocation policies
 motivated recent works \cite{demirel_deepcas:_2018,leong_deep_2018,baumann_drl_etc, redder2019deep, lima2020optimal_tsp} to use deep reinforcement learning to design resource allocation policies for wireless control systems.

%%%%%%%%%%%%%%%%%%%%%%%%%%%%%%%%%%%%%%
%%% SECTION : GNNs  %%%
%%%%%%%%%%%%%%%%%%%%%%%%%%%%%%%%%%%%%%
\section{Graph Neural Networks}
\label{sec:gnns_perm_inv}
%!TEX root = GNNAllocationWCS.tex

%%%%%%%%%%%%%%%%%%%%%%%%%%%
%%% SECTION : GNNs %%%%
%%%%%%%%%%%%%%%%%%%%%%%%%%%
Reinforcement learning (RL) is a mathematical framework to handle sequential decision problems. At each time step, an agent executes some action $a_t \in \mathcal{A}$ sampled from a stochastic policy $\pi(a|s) \in \Pi$, observes the resulting state of the system, $s_t \in \mathcal{S}$, receives a one-step cost $r_t$ from the environment and then tries to find a policy that minimizes the cumulative cost of those transitions \cite{sutton_reinforcement_learning}. 
 As looking for policies directly is usually infeasible, one approximates the stochastic policy with some parametrization, with that parametrization corresponding to a neural network --- high capability approximators \cite{hornik_1989} --- in the field of \emph{deep} reinforcement learning. The sucess of deep reinforcement learning, in turn, motivated the use of deep RL algorithms to design resource allocation policies for wireless control systens, cf., e.g., \cite{demirel_deepcas:_2018, leong_deep_2018,baumann_drl_etc, redder2019deep, lima2020optimal_tsp}. 

A standard neural network consists of a series of computational layers where each unit in layer $l$ computes a linear combination  of the outputs of layer $l - 1$, and then applies a pointwise nonlinear transformation on top of that linear combination. 
Each hidden layer $l$ is composed of hidden units than can be computed by 
 \begin{equation}
z_{l} = \phi( C_{l} z_{l-1} + b_l),
\label{eq:nn_hidden_units}
 \end{equation}
 with $\phi(\cdot)$ a nonlinearity and the matrices $C_l$, $b_l$ aggregating the weights of the linear combination in that layer.
 Combining the successive computational layers, the output of the neural network is then given by 
\begin{equation}
y_{\text{NN}}(z_0) = \phi \left(C_L  \phi \left( \dots \phi \left( C_1 z_0 + b_1 \right) \right) + b_L  \right). 
\label{eq:nn_output}
\end{equation}
For the allocation policy in \eqref{eq:constrained_optimal_prob}, the input $z_0$ corresponds to, e.g., estimates of the plants states and interference conditions in the network, whereas the output of the neural network characterizes the corresponding allocation policy.
 
 Multilayer neural networks are known near-universal approximators \cite{hornik_1989}, but rely on a rather large number of learnable parameters. As evidenced by equations  \eqref{eq:nn_hidden_units} - \eqref{eq:nn_output}, the learnable parameters correspond to the weights and biases used in linear combinations at each hidden unit in the network, that is, $\theta = [C_1; b_1; \hdots; C_L; b_l] $. As the dimension of the input to the neural network grows, so does the number of learnable parameters. In particular, let $K_l$ be the number of hidden units in layer $l$. For the resource allocation problem, the overall number of learnable parameters is given by
\begin{equation}\label{eq:dnn_dim}
r_{\text{NN}} = m(m + n_f)K_1 + \sum_{l = 2}^L K_l K_{l - 1},
\end{equation} 
scaling linearly with the dimension of the input associated with each plant, $n_f$, and quadratically with the number of plants in the network, $m$.
Although the resource allocation problem \eqref{eq:constrained_optimal_prob} lacks the type of spatial or temporal regularity that allows convolutional neural networks to circumvent this dimensionality issue in applications such as image processing, the communication model  \eqref{eq:interference_matrix} does define an underlying graph structure for the optimization problem. This suggests the use of \emph{graph} neural networks (GNNs) to parameterize the resource allocation policies, as we discuss next. 

\subsection{Graph Neural Networks}
\label{sec:gnn_basics}
GNNs can be viewed as a generalization of the popular convolutional neural network (CNN) model. In CNNs, the linear operations used in standard deep neural networks are replaced with linear convolutional filters. 
This more controlled structure significantly reduces the number of overall parameters the model must learn during training, since the algorithm now learns the coefficients of the corresponding convolutional filters and not the weights of the linear combinations computed at every neuron in the network. 
%Moreover, the dimensionality of the filters being learned is invariant to the size of the input data, making the CNN attractive for large scale applications. 
While the convolutions employed by CNNs are naturally suited for processing of temporal or spatial data, the same does not hold true for inputs without such a regular structure. The WCS architecture considered here, however, nonetheless contains structure embedded in the fading and interference patterns $H$ that can be incorporated into the policy parameterization --- namely, this structure may be represented by a graph. 

GNNs generalize the CNN model by replacing the standard convolutional filter with a \emph{graph} convolutional filter \cite{Gama_2019}. 
 For a graph $G = (\mathcal{V}, \mathcal{E}, \mathcal{W})$ with node and edge sets $\mathcal{V} = \{1, \dots, N \}$, $\mathcal{E} = \{ (i, j); i,j \in \mathcal{V} \} $ and weight function $\mathcal{W} \, : \, \mathcal{E} \to \mathbb{R} $, let then the graph shift operator (GSO) be defined as a matrix $S \in \mathbb{R}^{N \times N}$ that reflects the sparsity of the graph, that is, $S_{ij} = 0$ if $i \neq j$ and $(i,j) \notin \mathcal{E}$. 
Let also $y = [y^{(1)}, \dots, y^{(N)}]$ a graph signal with components $y^{(i)} \in \mathbb{R}^{n_f}  $, %at node
 $i \in \mathcal{V}$. A graph \emph{convolution} can then be defined as a weighted sum of shifted versions of the graph signal,
\begin{equation}
	z = \sum_{k=0}^{K - 1} S^k y \Psi_k,
	\label{eq:graph_conv}
\end{equation}
producing another graph signal $z\in \mathbb{R}^{m \times g}$ with $g$ features. The matrix $\Psi_k \in \mathbb{R}^{n_f \times g}$ aggregates the \emph{filter taps} $[\Psi_k]_{fg} = \psi_k^{fg}$ used to modulate information received by the $k$-hop neighborhood of each node.
%Note that the graph convolution in \eqref{eq:graph_conv} reflects the local structure of the graph: $S^K_{ij} \neq 0$ only if node $j$ is a $k$-hop neighbor of $i$ \cite{Gama_2019}. 
%
%%%%%%%%%%%%%%%%%%%%%%%%%%%
%%%%%%%%% FIGURE %%%%%%%%%%
%%%%%%%%%%%%%%%%%%%%%%%%%%%
\tikzstyle{block} = [draw, fill=blue!20, rectangle, 
    minimum height=1cm, minimum width=5em]
\tikzstyle{largerblock} = [draw, fill=blue!20, rectangle, 
    minimum height=1cm, minimum width=18em]
\tikzstyle{sum} = [draw, fill=blue!20, circle, node distance=1cm]
\tikzstyle{input} = [coordinate]
\tikzstyle{output} = [coordinate]
\tikzstyle{pinstyle} = [pin edge={to-,thin,black}]
\tikzset{edge/.style = {->,> = latex'}}
%\captionsetup[figure]{font={small}}
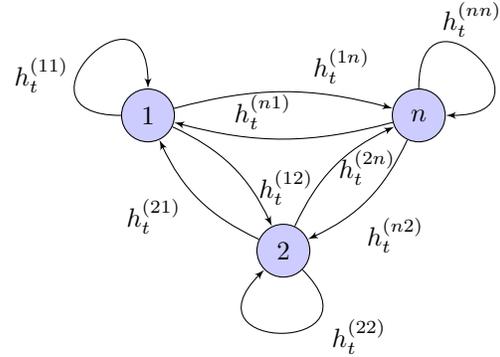
\begin{figure}
\centering
  \def \thisplotscale {4}
\def \unit {\thisplotscale cm}
%
%\begin{tikzpicture}[x = 1*\unit, y=1*\unit]
%\begin{axis}[scale only axis,
%             width  = 2.6*\unit,
%             height = 0.8*\unit,
%             xmin = -0.5, xmax=6.8,
%             xtick = {6.8},
%             xticklabels = {$\lam$},
%             ymin = -0, ymax = 1.05,
%             ytick = {1.05},
%             yticklabels = {$\tdh(\lambda)$}]

\begin{tikzpicture}[scale=0.9,
	path/.style={
		->,
		>=stealth,
	},
	decoration={
		markings,
		mark=at position 0.59cm*1 with {\arrow[black]{stealth}},
		mark=at position 0.59cm*2 with {\arrow[black]{stealth}},
		mark=at position 0.59cm*3 with {\arrow[black]{stealth}},
		mark=at position 0.59cm*4 with {\draw (0,0) circle;},
		mark=at position 0.6cm*5-0.01cm with {\arrow[black]{stealth}},
		mark=at position 6mm*6-0.1 with {\arrow[black]{stealth}},
		mark=at position 7*6mm-0.1 with {\arrow[black]{stealth}},
		mark=at position 8*6mm-0.1 with {\draw (0,0) circle;},
		mark=at position 9*6mm-0.1 with {\draw (0,0) circle;},
		mark=at position 10*6mm-0.1 with {\draw (0,0) circle;},
		mark=at position 11*6mm-0.1 with {\arrow[black]{stealth}},
		mark=at position 12*6mm-0.1 with {\arrow[black]{stealth}},
		mark=at position 13*6mm-0.1 with {\arrow[black]{stealth}},
		mark=at position 14*6mm-0.1 with {\draw (0,0) circle;},
		mark=at position 15*6mm-0.1 with {\arrow[black]{stealth}},
		mark=at position 16*6mm-0.1 with {\arrow[black]{stealth}},
		mark=at position 17*6mm-0.1 with {\arrow[black]{stealth}},
		mark=at position 18*6mm-0.1 with {\arrow[black]{stealth}},
		mark=at position 19*6mm-0.1 with {\draw (0,0) circle;},
		mark=at position 20*6mm-0.1 with {\arrow[black]{stealth}},
		mark=at position 21*6mm-0.1 with {\arrow[black]{stealth}},
		mark=at position 22*6mm-0.1 with {\arrow[black]{stealth}},
		mark=at position 23*6mm-0.1 with {\draw (0,0) circle;},
	}, radius=1pt,
	baseline=(receiver.base)]

\foreach \x/\xtext in {300/2}
	\draw (\x:2.4) node {};
	
	\node[circle,draw=black, fill=blue!20, inner sep=0pt,minimum size=20pt] (plant1) at (-4.5,1.) {$1$};
	
	\node[circle,draw=black, fill=blue!20, inner sep=0pt,minimum size=20pt] (plant2) at (-2.5,-1) {$2$};
	
\node[circle,draw=black, fill=blue!20, inner sep=0pt,minimum size=20pt] (plant3) at (-.5,1) {$n$};

	\draw[edge](plant1) to[bend left=20] 
	node[right, pos=0.7]{$h_t^{(12)}$} (plant2);
	\draw[edge](plant2) to[bend left=20] 
	node[below left, pos=0.6]{$h_t^{(21)}$} (plant1);
	
\draw[edge](plant1) to[bend left=15] 
	node[above right, pos=0.6]{$h_t^{(1n)}$} (plant3);
	\draw[edge](plant3) to[bend left=15] 
	node[above, pos=0.6]{$h_t^{(n1)}$} (plant1);
	
	\draw[edge](plant3) to[bend left=20] 
	node[below right, pos=0.6]{$h_t^{(n2)}$} (plant2);
	\draw[edge](plant2) to[bend left=20] 
	node[below, pos=0.8]{$h_t^{(2n)}$} (plant3);
	
	\draw[edge](plant1) to[out=180,in=90,loop] 
	node[above left, pos=0.2]{$h_t^{(11)}$} (plant1);
	
	\draw[edge](plant2) to[out=-45,in=-135,loop] 
	node[below right, pos=0.2]{$h_t^{(22)}$} (plant2);
	
	\draw[edge](plant3) to[out=90,in=0,loop] 
	node[above right, pos=0.2]{$h_t^{(nn)}$} (plant3);

\end{tikzpicture}
\caption{The communication graph of the wireless control system defines a time-varying interference graph  over which we  can construct a GNN.}
\label{fig:interference_graph_pic}
\end{figure}
%%%%%%%%%%%%%%%%%%%%%%%%%%%
%%%%%%%%%%%%%%%%%%%%%%%%%%%
%%%%%%%%%%%%%%%%%%%%%%%%%%%
%

Combining graph convolutions and pointwise nonlinear operations yields graph \emph{neural networks}. Each layer $l$ in a GNN takes as input a graph signal $y_l$ produced by the previous layer and outputs a graph signal $y_{l + 1}$ computed by a graph convolution followed by a nonlinear operation,
\begin{equation}
y_{l + 1} = \phi_l \left(\sum_{k=0}^{K - 1} S^k y_l \Psi_{lk}  \right).
\label{eq:gnn_conv_layer}
\end{equation}
The nonlinear operation $\phi_l$ may be any function that respects the local structure of the GSO $S$ \cite{Gama_2019}. 

The interference matrix representing the state of the wireless network underlying the control system at some time instant $t$, $H_t$, can be used to define a graph $\mathcal{G}_t$ with nodes $\ccalV = \{1,\hdots,m\}$ given by the $m$ plants and edges corresponding to the interference between the transmitter associated to plant $i$ and the receiver associated to plant $j$, that is, $\mathcal{W}_t((i,j)) := h^{(ij)}_t$ --- see Figure \ref{fig:interference_graph_pic}. The GSO here corresponds to the interference matrix itself, $S_t = H_t$. 
 Observe that while in standard applications of GNNs, e.g. \cite{Gama_2019}, the graph $\mathcal{G}$ is fixed, here the graph defined by the interference model in \eqref{eq:interference_matrix} is randomly distributed. Hence we make use of the notion of \emph{random edge} GNNs (REGNNs) introduced by \cite{eisen2019optimal_gnns}. If we denote by $z_0$ the input graph signal --- made up, for example, by the current state of the plants in the wireless control system --- then we can define the REGNN as
\begin{multline}
y_{\text{GNN}}(z_0) = \\ \phi_L \left(\sum_{k=0}^{K_L - 1} H_t^k   \left( \dots  \phi_1 \left( \sum_{k=0}^{K_1 - 1} H_t^k z_0 \Psi_{1k} \right) \dots  \right)\Psi_{Lk} \right).
\label{eq:gnn_multilayer}
\end{multline}
This output may then be used to parameterize the policy distribution in \eqref{eq:uplink_param} --- e.g. success probability of a Bernoulli distribution --- where $\theta := [\Psi_1, \hdots, \Psi_L]$ contains the filter coefficients that define the REGNN. %Here, one
One needs to learn only the coefficients of the graph filters used at each hidden layer; letting $K_l$ the filter length and $F_l$ the number of features at each layer $l$, %$l = 1, \dots, L$, 
the overall number of learnable parameters  is
\begin{equation}\label{eq:gnn_dim}
r_{\text{GNN}} = \sum_{l = 1}^L K_l F_l F_{l-1},
\end{equation}
with $F_0 = n_f$, the number of features of the input signal, and $F_L$ the number of features of the output signal.
Observe in \eqref{eq:gnn_dim} that, contrary to standard neural networks \eqref{eq:dnn_dim}, the parameter dimension $r_{\text{GNN}}$ is independent of the number of plants, $m$, and thus structurally suitable to large control networks both in terms of its scalable dimensionality and possible transferability to alternative networks of varying size. This latter case is of particular interest in the practical design of learning solutions, which we explore in greater detail in the proceeding section.
%%%%% SUBSECTION %%%%%
%!TEX root = GNNAllocationWCS.tex

%%%%%%%%%%%%%%%%%%%%%%%%%%%
%%%% SECTION : Equivariance of REGNNs %%%%
%%%%%%%%%%%%%%%%%%%%%%%%%%%

\subsection{Permutation Equivariance of (RE)GNNs}

Training GNNs boils down to learning the filter coefficients $\psi_l$ used at each layer of the GNN. As the filter coefficients do not depend on a particular graph, they can be trained in a given graph but applied to any other graph shift operator $S$ to construct a graph convolution. Hence, a trained GNN can be deployed on other networks, making them \emph{transferable}. Naturally, transferability of filter taps does not necessarily imply that the trained GNN will achieve the same performance on the new network. While we investigate more general instances of transferability with numerical experiments in Section \ref{sec:num_exp}, we first turn our attention into \emph{permutations} or node reorderings. %We will show
As node reorderings do not alter the structure of the graph, it is not unreasonable to expect that filters learned in a graph can be applied to its permutations without loss of performance. 

Let then the set of permutation matrices be defined as
\begin{equation}
\mathcal{P} = \left\{ P \in \{ 0, 1\}^{m \times m}: P \mathbf{1} = 1, P^\trp \mathbf{1} = 1 \right\}.
\end{equation}
For any matrix $P$ in $\mathcal{P}$, the product $P^\trp v$ reorders the entries of vector $v$, and, accordingly, the product $P^\trp M P$ reorders the rows and columns of matrix $M$. 
One can show that GNNs are permutation equivariant, in the sense that node reorderings lead to a similar permutation of the output of a GNN \cite{Gama_2019}. 
Initially discussed for GNNs with fixed topology, the notion of permutation equivariance was later extended to random edge GNNs, as shown in \cite{eisen2019optimal_gnns}.  
\begin{proposition}[Eisen et al  {\cite[Prop.~2]{eisen2019optimal_gnns}} ]
\label{prop:eisen-regnns-inv}
Consider graphs $H$ and $\hat{H}$ along with signals $z$ and $\hat{z}$ such that for some permutation matrix $P$ we have $\hat{H} = P^\trp H P$ and $\hat{z} = P^\trp z$. The outputs of a REGNN with filter tensor $\theta$ to the pairs $(H, z)$ and $(\hat{H}, \hat{z})$ are such that 
\begin{equation}
y(\hat{H}, \hat{z}; \theta) = P^\trp y(H, z; \theta).  
\end{equation}
\end{proposition}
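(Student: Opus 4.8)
The plan is to prove the claim by induction on the layers of the REGNN in \eqref{eq:gnn_multilayer}, after isolating two elementary facts about how a permutation matrix $P\in\ccalP$ interacts with the two building blocks of a GNN layer \eqref{eq:gnn_conv_layer}: the graph convolution and the pointwise nonlinearity. The starting point is that a permutation matrix is orthogonal, $PP^\trp = P^\trp P = \bbI$, so that every power satisfies $\hhatH^k = (P^\trp H P)^k = P^\trp H^k P$.

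\textbf{Step 1 (graph convolutions are permutation equivariant).} Here I would combine $\hhatH^k = P^\trp H^k P$, $\hhatz = P^\trp z$, and $PP^\trp = \bbI$ to get $\hhatH^k\hhatz = P^\trp H^k P P^\trp z = P^\trp H^k z$. Since the filter taps $\Psi_k$ multiply on the right and therefore act on the feature dimension --- whereas $P^\trp$ and the powers of the shift operator act on the node dimension --- the two multiplications commute, and summing over $k$ would give
\begin{equation*}
\sum_{k=0}^{K-1}\hhatH^k\hhatz\,\Psi_k \;=\; P^\trp\sum_{k=0}^{K-1}H^k z\,\Psi_k .
\end{equation*}

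\textbf{Step 2 (pointwise nonlinearities are permutation equivariant).} Next I would observe that if $\phi$ acts entrywise on a node-indexed signal, then permuting rows commutes with applying $\phi$, i.e. $\phi(P^\trp M) = P^\trp\phi(M)$ for every $M$; this is exactly the sense in which $\phi_l$ is required to respect the local structure of the shift operator in \eqref{eq:gnn_conv_layer}.

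\textbf{Step 3 (induction over layers).} Writing $y_0 = z$, $\hhaty_0 = \hhatz$, and letting $y_{l+1}$, $\hhaty_{l+1}$ denote the layer-$l$ outputs of \eqref{eq:gnn_conv_layer} computed with shift operators $H$ and $\hhatH$ respectively, the induction hypothesis $\hhaty_l = P^\trp y_l$ holds at $l=0$ by assumption. Assuming it at level $l$, Step 1 applied to the signal $y_l$ gives $\sum_k\hhatH^k\hhaty_l\Psi_{lk} = P^\trp\sum_k H^k y_l\Psi_{lk}$, and then Step 2 applied to $\phi_l$ gives $\hhaty_{l+1} = \phi_l\big(P^\trp\sum_k H^k y_l\Psi_{lk}\big) = P^\trp\phi_l\big(\sum_k H^k y_l\Psi_{lk}\big) = P^\trp y_{l+1}$. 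Iterating through $l = 1,\dots,L$ and noting that $y(H,z;\theta)$ in \eqref{eq:gnn_multilayer} is the final layer output, I would conclude $y(\hhatH,\hhatz;\theta) = P^\trp y(H,z;\theta)$.

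I do not expect a genuine obstacle here: the argument is essentially bookkeeping once Steps 1 and 2 are in place. The one point that needs care is consistently tracking which multiplications act on the node dimension (left multiplication by $P^\trp$ and by powers of the shift operator) versus the feature dimension (right multiplication by the taps $\Psi_{lk}$), since it is precisely the commutativity of these two kinds of multiplication that makes Step 1 work. I would also note that the proof uses only $PP^\trp = \bbI$ and the entrywise action of each $\phi_l$, so it goes through verbatim for REGNNs with an arbitrary number of features per layer, which is why the statement imposes no restriction on the filter tensor $\theta$.
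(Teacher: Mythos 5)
Your argument is correct: the orthogonality of $P$ gives $(P^\trp H P)^k = P^\trp H^k P$, right-multiplication by the taps $\Psi_{lk}$ acts on the feature dimension and so commutes with the node-dimension permutation, pointwise nonlinearities commute with row permutations, and the layerwise induction then closes. The paper itself states this proposition without proof, importing it from the cited reference \cite[Prop.~2]{eisen2019optimal_gnns}; your derivation is essentially the standard argument given there, so there is nothing to flag.
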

Proposition \ref{prop:eisen-regnns-inv} states that, if we reorder the nodes of the graph with some permutation matrix $P$, the output of the GNN will be reordered accordingly. Hence, we can expect permutations not to affect the performance of a GNN. 
Although it is not immediately clear if the permutation equivariance structure of (RE)GNNs is shared by the resource allocation problem, conditions under which that holds are discussed next.

%%%%%%%%%%%%%%%%%%%%%%%%%%%%%%%%%%%%%%
%%%%% SECTION: Permutation invariance %%%%%
%%%%%%%%%%%%%%%%%%%%%%%%%%%%%%%%%%%%%%
%!TEX root = GNNAllocationWCS.tex

%%%%%%%%%%%%%%%%%%%%%%%%%%%
%%%% SECTION : Invariance of RA in WCS %%%%
%%%%%%%%%%%%%%%%%%%%%%%%%%%

\section{Permutation Invariance of Resource Allocation Policies in WCSs}
\label{subsec:ra_wcss_invariance}
According to Proposition \ref{prop:eisen-regnns-inv}, REGNNs are equivariant to node reorderings; permutation equivariant architectures, however, do not necessarily imply that filter taps learned for a given graph will achieve a similar performance when applied to permutations of the original graph. 
For invariance of the optimal filter taps to hold under permutation, let us first introduce some assumptions on the structure of the  WCS that make the resource allocation problem symmetric.  

%------------------------------------- Assumptions ----------------------------------%
\begin{assumption} The one-step cost $c(\cdot)$ in \eqref{eq:constrained_optimal_prob} is permutation invariant, that is for any 
	 $P \in \mathcal{P}$ and any $x \in \mathbb{R}^{m p}$,
\begin{equation}
c(P^\trp x) = c(x).
\label{eq:one-step_invariant}
\end{equation} 
\label{as:one-step_invariant}
\end{assumption}
\begin{assumption}
The constraint function $l(\alpha)$ is permutation invariant in the sense that for any  $P \in \mathcal{P}$ and  $\alpha \in \mathbb{R}^{m}$,
\begin{equation}
l(P^\trp \alpha) = l(\alpha).
\end{equation}
\end{assumption}

\begin{assumption} The control plants have similar dynamics,
\begin{equation}
f^{(1)}(x, u) = \dots  = f^{(m)}(x, u),
\label{eq:similar_dyn}
\end{equation}
and similar control policies,
\begin{equation}
	g^{(1)}(x) = \dots = g^{(m)}(x).
	\label{eq:equivariant_control}
\end{equation}
\label{as:similar_dyn}
\end{assumption}

\begin{assumption} The random variables $w_{o, t}^{(i)}$ in \eqref{eq:observation_noise} are independent and identically distributed (i.i.d.). 
\label{as:iid_obs_noise}
\end{assumption}

Assumption \ref{as:one-step_invariant} is easily satisfied in the quadratic case as long as the weight matrix $Q$ is permutation invariant --- which is equivalent to penalizing the deviation of all plants from the equilibrium point with the same weight.
Requiring all plants to have similar dynamics is a somewhat restrictive but reasonable assumption in industrial applications. 
 Given \eqref{eq:similar_dyn}, condition \eqref{eq:equivariant_control} is  mild and holds for example under the assumption that the controller is a quadratic regulator based on a locally valid linear model of the plants. 
%----------------------------------------------------------------------------------%
%
%------------------------------------- Proposition --------------------------------%
%
Under Assumptions \ref{as:one-step_invariant} --- \ref{as:iid_obs_noise}, one can  construct a global function mapping previous plant states to current ones that remains equivariant under node reorderings.
Combined with requirements on the underlying probability distribution characterizing the behavior of the wireless network, this allows us to show that %the optimization goal of
 problem \eqref{eq:constrained_optimal_prob} is invariant to permutations. 

\begin{proposition}
Consider the resource allocation problem \eqref{eq:constrained_optimal_prob} and wireless networks defined by probability distributions $\chi(H)$ and $\hat{\chi}(\hat{H})$. 
Assume hypotheses \ref{as:one-step_invariant} --- \ref{as:iid_obs_noise} hold and that 
\begin{equation}
\hat{\chi}(\hat{H}) = \hat{\chi}(P^\trp H P) = \chi(H).
\label{eq:inv_wireless_prob}
\end{equation}
Moreover assume that, for the same permutation matrix, the resource allocation vectors $\alpha$ are such that
\begin{equation}
\alpha(\hat{H}, \hat{x}) = P^\trp \alpha(H,x).
\label{eq:eqv_allocation_vector}
\end{equation}
 Then the optimal cost, $J(\theta^*)$, and the corresponding constraint, $L(\theta^*)$, are invariant under permutations, that is,
\begin{equation}
\begin{aligned}
 &J_{\hat{\chi}(\hat{H})} (\theta^{*}) = J_{\chi(H)} (\theta^{*}), \\
 &L_{\hat{\chi}(\hat{H})} (\theta^{*}) = L_{\chi(H)} (\theta^{*}).
 \label{eq:invariant_opt-cost}
 \end{aligned}
 \end{equation}
\label{prop:cost_invariance}
\end{proposition}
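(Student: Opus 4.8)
The plan is to build a pathwise coupling of the two systems and then integrate out the randomness, using permutation invariance of the stage cost and constraint together with permutation equivariance of the REGNN policy. Fix a permutation matrix $P \in \mathcal{P}$ and write $\pi$ for the index relabeling it induces, so that $(P^\trp v)^{(i)} = v^{(\pi(i))}$ and $(P^\trp M P)_{ij} = M_{\pi(i)\pi(j)}$. Draw a fading trajectory $\{H_t\}_{t=0}^{T}$ from $\chi$ and set $\hat H_t := P^\trp H_t P$; by hypothesis \eqref{eq:inv_wireless_prob} the relabeled process $\{\hat H_t\}$ has law $\hat\chi$. Couple the observation noises by $\hat w_{o,t}^{(i)} := w_{o,t}^{(\pi(i))}$ and, analogously, the process noises; Assumption \ref{as:iid_obs_noise} (together with the matching per-plant noise statistics that accompany the common dynamics of Assumption \ref{as:similar_dyn}) makes the relabeled ensemble equal in joint law to the original, so this is a legitimate coupling. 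Finally, I would couple the Bernoulli packet-drop draws in \eqref{eq:plant_switched_prob}, which is justified once the corresponding success probabilities are shown to coincide.

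Next I would show by induction on $t$ that the coupled state trajectories satisfy $\hat x_t = P^\trp x_t$ whenever $\hat x_0 = P^\trp x_0$. For the inductive step, assume $\hat x_t = P^\trp x_t$. Then the observations \eqref{eq:observation_noise} obey $\hat{\tilde x}_t = P^\trp \tilde x_t$ by the noise coupling; the allocation \eqref{eq:uplink_param} obeys $\hat\alpha_t = \alpha(\hat H_t, \hat{\tilde x}_t) = P^\trp \alpha(H_t,\tilde x_t)$ by \eqref{eq:eqv_allocation_vector}, which for a REGNN policy is exactly Proposition \ref{prop:eisen-regnns-inv}; and a one-line inspection of \eqref{eq:signal-to-noise-interference}, using that $j \mapsto \pi(j)$ is a bijection of $\{1,\dots,m\}$ and that the noise floor $\sigma^2$ is a shared constant, gives $\hat\xi_t^{(i)} = \xi_t^{(\pi(i))}$. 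Hence the loop-closing probabilities match, $v(\hat\xi_t^{(i)}) = v(\xi_t^{(\pi(i))})$, so the packet-drop events can be coupled so that the loop closes for plant $i$ in the permuted system exactly when it closes for plant $\pi(i)$ in the original. Invoking the common controller $g$ and dynamics $f$ of Assumption \ref{as:similar_dyn}, \eqref{eq:plant_switched_prob} gives $\hat u_t = P^\trp u_t$, and then \eqref{eq:ind_plant_dyn} gives $\hat x_{t+1}^{(i)} = f\big(x_t^{(\pi(i))}, u_t^{(\pi(i))}\big) + w_t^{(\pi(i))} = x_{t+1}^{(\pi(i))}$, i.e. $\hat x_{t+1} = P^\trp x_{t+1}$.

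With the coupled trajectories in hand the conclusion is immediate on every sample path. Assumption \ref{as:one-step_invariant} gives $c(\hat x_t) = c(P^\trp x_t) = c(x_t)$ for each $t$, hence $\sum_{t=0}^{T}\gamma^t c(\hat x_t) = \sum_{t=0}^{T}\gamma^t c(x_t)$ pathwise, and taking expectations under the coupling yields $J_{\hat\chi}(\theta) = J_\chi(\theta)$. Likewise, permutation invariance of the constraint function $l$ gives $l(\hat\alpha_t) = l(P^\trp \alpha_t) = l(\alpha_t)$, so $L_{\hat\chi}(\theta) = L_\chi(\theta)$. Both identities hold for every parameter $\theta$, and specializing to an optimal $\theta^*$ gives \eqref{eq:invariant_opt-cost}.

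The crux is not the algebra but the probabilistic bookkeeping behind the coupling: one must argue that relabeling the fading matrices and the exogenous noises produces a process of exactly the same law (this is where Assumption \ref{as:iid_obs_noise} and the symmetry built into \eqref{eq:inv_wireless_prob} are used) and that the packet-drop Bernoullis of the two systems can be realized on a common probability space consistently with that relabeling. Once the coupling is correct, equivariance of the policy and invariance of $c$ and $l$ finish the proof almost mechanically. A minor gap to patch is that Assumption \ref{as:similar_dyn} literally constrains only the deterministic maps $f^{(i)}$ and $g^{(i)}$, so I would state explicitly that the per-plant process-noise covariances $W^{(i)}$ are taken equal as well (equivalently, that the statement concerns equality of distributions, for which exchangeability of the noise suffices).
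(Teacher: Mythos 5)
Your proof is correct and is in substance the paper's own argument: the paper establishes $\hat{p}_{\hat{\alpha}}(\mathbf{\hat{s}}, \mathbf{\hat{\alpha}}) = p_{\alpha}(\mathbf{s}, \mathbf{\alpha})$ by factorizing the trajectory density and checking each factor (initial state, policy, observation noise, transition kernel with its SINR-driven Bernoulli drops) under the change of variables $\hat{x} = P^\trp x$, $\hat{H} = P^\trp H P$, which is precisely the distributional content of your pathwise coupling and the induction $\hat{x}_t = P^\trp x_t$. Your closing observation that the process-noise laws must be exchangeable (e.g.\ equal covariances $W^{(i)}$) for the argument to close is a fair catch --- the paper's step $\hat{p}_{w}(P^\trp w) = p_w(w)$ in \eqref{eq:Q_hat_1} silently relies on the same property.
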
 
\begin{proof}
	See appendix \ref{app:proof_prop2}.
\end{proof}
%----------------------------------------------------------------------------------%
%
The proposition states that, under equivariance assumptions on the probability distributions characterizing the wireless networks, and similar assumptions that make the resource allocation policy and the structure of the WCS symmetric, the optimal cost of the resource allocation problem is invariant under permutations.
That is, reordering the plants --- and the communication graph accordingly --- will not affect the expected optimal cost of the problem. 
The requirement on the equivariance of the resource allocation policy depends on the parameterization in hands and holds for example if the allocation policy is parameterized with REGNNs.
The equivariance of REGNNs  (Proposition \ref{prop:eisen-regnns-inv}) and the invariance of the optimal cost (Proposition \ref{prop:cost_invariance}) support transferability of optimal filters among permutations of the WCS. 
\begin{theorem}
Consider WCSs operating over wireless networks characterized by probability distributions $\chi(H)$ and $\hat{\chi}(\hat{H})$. Assume there exists $P \in \mathcal{P}$ such that, if $\hat{H} = P^\trp H P$ and $\hat{x}_0 = P^\trp x_0$,  the probability distributions satisfy
\begin{equation}
\hat{\chi}(\hat{H}) = \hat{\chi} (P^\trp H P) = \chi (H)
\end{equation}
and Assumptions \ref{as:one-step_invariant} --- \ref{as:iid_obs_noise} hold. Then, a GNN parameter $\theta^*$ is optimal for the wireless control system characterized by distribution $\chi(H)$ if and only if it is optimal for the wireless control system characterized by distribution $\hat{\chi}(\hat{H})$.
\label{theo:filter_invariance}
\end{theorem}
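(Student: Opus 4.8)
The plan is to reduce Theorem~\ref{theo:filter_invariance} to Proposition~\ref{prop:cost_invariance} once we observe that, when the allocation policy is a REGNN, the equivariance hypothesis~\eqref{eq:eqv_allocation_vector} is not an extra assumption but a built-in property that holds \emph{for every} parameter $\theta$. Indeed, Proposition~\ref{prop:eisen-regnns-inv} asserts that for any filter tensor $\theta$ and any permutation $P$, if $\hat H = P^\trp H P$ and the input signal is relabeled as $\hat z = P^\trp z$, then the REGNN output satisfies $y(\hat H,\hat z;\theta) = P^\trp y(H,z;\theta)$. Taking $z$ to be the state-estimate signal that feeds the policy~\eqref{eq:uplink_param}, this is precisely~\eqref{eq:eqv_allocation_vector}, and it holds for all $\theta$, not merely at an optimum.

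Next I would re-run the argument behind Proposition~\ref{prop:cost_invariance} with $\theta^*$ replaced by an arbitrary parameter $\theta$. That argument uses only the permutation invariance of the channel law~\eqref{eq:inv_wireless_prob}, the equivariance~\eqref{eq:eqv_allocation_vector}, the relabeling of the initial condition $\hat x_0 = P^\trp x_0$, and Assumptions~\ref{as:one-step_invariant}--\ref{as:iid_obs_noise}: together these make the global plant-update map equivariant, so the trajectory of the permuted system is equal in distribution to $P^\trp$ applied to the trajectory of the original one, and then $c(P^\trp x_t) = c(x_t)$ and $l(P^\trp\alpha_t) = l(\alpha_t)$ collapse the discounted sums. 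Since optimality of $\theta^*$ is never invoked, the conclusion strengthens to $J_{\hat\chi(\hat H)}(\theta) = J_{\chi(H)}(\theta)$ and $L_{\hat\chi(\hat H)}(\theta) = L_{\chi(H)}(\theta)$ for every REGNN parameter $\theta$.

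Finally I would cash this in. Because a GNN's filter taps do not depend on the underlying graph, the \emph{same} parameter $\theta$ defines an admissible policy on both WCSs --- here one uses that $\mathcal{A}$ is permutation-symmetric, i.e. $P^\trp\mathcal{A} = \mathcal{A}$, which holds for the usual per-plant and sum-power budgets --- so the two instances of problem~\eqref{eq:constrained_optimal_prob} are minimizations over the \emph{same} domain. By the previous paragraph their objectives agree pointwise and their constraint functions agree pointwise, hence the feasible sets $\{\theta : L_{\chi(H)}(\theta)\leq 0\}$ and $\{\theta : L_{\hat\chi(\hat H)}(\theta)\leq 0\}$ coincide and $J_{\chi(H)}$, $J_{\hat\chi(\hat H)}$ agree on this common set. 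Two constrained programs with identical feasible set and identical objective have identical minimizers, so $\theta^*$ is optimal for the WCS with law $\chi(H)$ if and only if it is optimal for the WCS with law $\hat\chi(\hat H)$. I expect the one genuinely delicate point to be the second step: one must trace the proof of Proposition~\ref{prop:cost_invariance} carefully enough to certify that it uses nothing about $\theta^*$ beyond the equivariance supplied by Proposition~\ref{prop:eisen-regnns-inv}, and must confirm the admissibility set behaves well under $P$; everything after that is a one-line formal comparison of two optimization problems.
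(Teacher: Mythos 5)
Your proposal is correct and follows essentially the same route as the paper: both arguments rest on the fact that Proposition~\ref{prop:eisen-regnns-inv} supplies the equivariance hypothesis of Proposition~\ref{prop:cost_invariance} for \emph{every} REGNN parameter (a fact the paper's appendix proof of Proposition~\ref{prop:cost_invariance} also exploits, since it only invokes optimality of $\theta^*$ in its final line), so that cost and constraint values are preserved under the permutation. The only difference is packaging --- the paper closes with a contradiction argument applied at $\theta^*$ and at a hypothetical better $\hat{\theta}^*$, whereas you observe directly that the two constrained programs have identical objectives and feasible sets and hence identical minimizers; your explicit remark that $\mathcal{A}$ must satisfy $P^\trp\mathcal{A}=\mathcal{A}$ is a detail the paper leaves implicit.
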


\begin{proof}
Consider a parameter $\theta^*$ that solves \eqref{eq:constrained_optimal_prob} for distribution $\chi(H)$. We have 
\begin{equation}
 \begin{aligned}
  \theta^* \in & \argmin_{\theta}  J(\theta)  \\
  \text{s.t. } &  \mathbb{E} \left[ \sum_{t=0}^T \gamma^t l(\alpha) \right] \leq 0 \\
& \alpha \in \mathcal{A}. 
 \end{aligned}
\end{equation}
If $\hat{\chi}(\hat{H}) = \hat{\chi} (P^\trp H P) = \chi (H)$, Proposition \ref{prop:eisen-regnns-inv} implies that
\begin{equation}
\alpha(\hat{H}, \hat{x}; \theta^*) = P^\trp \alpha(H, x; \theta^*)
\end{equation}
and then Proposition \ref{prop:cost_invariance} holds. Thus, $\theta^*$ induces a feasible solution to the allocation problem defined over $\hat{\chi}(\hat{H})$ and
\begin{equation}
J (\theta^*)_{\hat{\chi}(\hat{H})} = J (\theta^*)_{\chi(H)}.
\end{equation}
Now, consider a GNN parameter $\hat{\theta}^*$ that solves \eqref{eq:constrained_optimal_prob} for distribution $\hat{\chi}(\hat{H})$, and assume that it achieves better performance than $\theta^*$ for $\hat{\chi}(\hat{H})$, that is,
\begin{equation}
    J (\hat{\theta}^*)_{\hat{\chi}(\hat{H})} < J (\theta^*)_{\hat{\chi}(\hat{H})}.
    \label{eq:contradiction_pf_1}
\end{equation}
Under the theorem's assumptions, $\hat{\chi}(\hat{H}) = \hat{\chi} (P^\trp H P) = \chi (H)$. Thus, it follows from Proposition \ref{prop:eisen-regnns-inv} that
\begin{equation}
\alpha(\hat{H}, \hat{x}; \hat{\theta}^*) = P^\trp \alpha(H, x; \hat{\theta}^*),
\end{equation}
and hence Proposition \ref{prop:cost_invariance} holds. That means that $\hat{\theta}^*$ induces a feasible solution for \eqref{eq:constrained_optimal_prob} under distribution $\chi(H)$  with
\begin{equation}
J (\hat{\theta}^*)_{\chi(H)} = J (\hat{\theta}^*)_{\hat{\chi}(\hat{H})}.
\end{equation}
From \eqref{eq:contradiction_pf_1}, however, that would imply
\begin{equation}
    J (\hat{\theta}^*)_{\chi(H)} <  J (\theta^*)_{\chi(H)},
    \label{eq:contradiction_pf_2}
\end{equation}
which violates the optimality of $\theta^*$ for the resource allocation problem defined over $\chi(H)$. Hence, it must hold that, if $\theta^*$ is optimal for the wireless control system characterized by distribution $\chi(H)$, then it is also optimal for the wireless control system characterized by distribution $\hat{\chi}(\hat{H})$. Similar arguments show that the converse also holds, concluding the proof of the theorem.
\end{proof}

%%%%%%%%%%%%%%%%%%%%%%%%%%%%%%%%
%%% SECTION : Model-free RL  %%%
%%%%%%%%%%%%%%%%%%%%%%%%%%%%%%%%
\section{Constrained Graph Reinforcement Learning}
\label{sec:res_allocation_RL}
%!TEX root = GNNAllocationWCS.tex

%%%%%%%%%%%%%%%%%%%%%%%%%%%%%%%%%
%%%% SECTION : Model-free RL %%%%
%%%%%%%%%%%%%%%%%%%%%%%%%%%%%%%%%

As the actuation signals in the dynamics of the WCS described in equations
 \eqref{eq:ind_plant_dyn} --- \eqref{eq:plant_switched_prob} depend only on the current value of the estimates of the plants states, $x_t$, and we restrict the  allocation policy to be Markovian, the transition of the system to a new state will depend only on the current state of the system, resource allocation decisions and state estimates. 
 We can thus see the resource allocation problem in \eqref{eq:ind_plant_dyn} --- \eqref{eq:constrained_optimal_prob} as a partially observable Markov decision process (MDP), and employ RL algorithms to solve it in a model-free manner.
To account for the constraint on the resource allocation policy, cf. \eqref{eq:constrained_optimal_prob}, we first introduce a dual variable $\lambda \in \mathbb{R}^r$ to formulate the Lagrangian of the constrained RL problem,
\begin{equation}
\begin{aligned}
\mathcal{L} (\theta, \lambda) &\coloneqq J(\theta) + \lambda^\trp \mathbb{E}_{x_0}^{ \alpha(\cdot; \theta)} \left[ \sum_{t=0}^{T} \gamma^t l(\alpha_t) \right] \\
& = \mathbb{E}_{x_0}^{ \alpha(\cdot; \theta)} \left[ \sum_{t=0}^{T} \gamma^t \tilde{r}_t \right],
\end{aligned}
\end{equation} 
with $l(\alpha)$ standing for the resource allocation constraint in \eqref{eq:constrained_optimal_prob} and $\tilde{r}_t$ the penalized one-step cost,
\begin{equation}
\tilde{r}_t = c(x_t) + \lambda^\trp l(\alpha_t).
\end{equation}
The dual optimization problem can then be defined as 
\begin{equation}
D_\theta^* = \max_{\lambda} \min_{\theta \geq 0} \mathcal{L} (\theta, \lambda).
\label{eq:dual_opt_prob}
\end{equation}
With a fixed $\lambda$, the inner optimization problem in \eqref{eq:dual_opt_prob} can  be seen as a standard MDP with objective corresponding to the Lagrangian $\mathcal{L} (\theta, \lambda)$ \cite{lima2020optimal_tsp}. The inner optimization problem can then be solved used standard RL algorithms, while the convexity of the outer optimization problem indicates that the dual variable can be updated via approximate gradient ascent.

As standard policy gradient algorithms can exhibit high variance and slow convergence and value based algorithms are unsuitable to learn continuous allocation decisions, we turn out attention to actor-critic algorithms. In particular, we rely on the Proximal Policy Optimization (PPO) algorithm \cite{schulman2017proximal}, with both the actor and the critic networks parametrized with REGNNs.
The actor network is trained to optimize a surrogate cost function measuring the policy performance while restricting how much the policy can change at each update, 
\begin{equation}
\begin{aligned}
& \theta^* \coloneqq \argmin_{\theta} J_{\text{PPO}}(\theta), \\
&J_{\text{PPO}}(\theta) = \mathbb{E} \left[ \min \left(  r_t(\theta)\hat{A}_t, \text{clip}(r_t (\theta), 1 - \epsilon, 1 + \epsilon ) \hat{A}_t \right) \right],
\end{aligned}
\label{eq:ppo_clip}
\end{equation}
with $\hat{A}_t$ an estimate of the advantage function, defined as the difference between the state-value function $Q_{\alpha(\cdot; \theta)}(s,a)$ --- the expected cost-to-go from state $s$ given action $a$ --- and the expected cost $J_{\alpha(\cdot; \theta)}(s)$ from following the current stochastic policy;
 $\epsilon$ a clipping hyperparameter; and the probability ratio
\begin{equation}
	r_t(\theta) = \frac{\pi_\theta(a_t|s_t)}{\pi_{\theta_{\text{old}}}(a_t|s_t)}.
\end{equation}
The parameters of the critic network are optimized to minimize a squared error loss between the value predicted by the current parameters and the estimated cost-to-go, %
\begin{equation}
\begin{aligned}
\eta^* &:= \argmin_{\eta} \mathbb{E}^{\pi(\cdot; \theta)}_{x_0} (\tdR_t - \rho(s; \eta))^2, \\
\tdR_t & = 
	\sum_{\tau = t}^{ \bar{t} - 1}  \gamma^{\tau - t} \tilde{r}_\tau + \gamma^{t_{\max}} R_{\bar{t}}, t = \bar{t} - 1, \dots, \bar{t} - t_{\max}, 
\label{eq:value_update}
\end{aligned}
\end{equation}
with $\bar{t} = k t_{\max}$, $k \geq 1$, and $t_{\max}$ the interval between updates. $ R_{\bar{t}} = 0$, if $\bar{t} = T$, and $ R_{\bar{t} } = \rho (s_{\bar{t} }; \eta)$ otherwise \cite{schulman2017proximal}. 
The dual variable $\lambda$ can be updated via approximate gradient descent, 
\begin{equation}
\begin{aligned}
\lambda_{i + 1} &= [ \lambda_i + \beta_{\lambda} \hat{\nabla}_{\lambda} \mathcal{L}(\theta, \lambda) ]_{+}, \\
\hat{\nabla}_{\lambda} \mathcal{L}(\theta, \lambda) &= \mathbb{E}^{\pi(\cdot; \theta)} \left[ \sum_{t = 0}^T l(\alpha_t) \right].
\end{aligned}
\label{eq:dual_update}
\end{equation}

 As described in Algorithm \ref{alg:rl}, we rely on $N$ simultaneous realizations of the WCS with fixed horizon $T$. Initial states of the control plants are sampled from a standard normal distribution, and fading states representing the transmission conditions of the wireless channel are sampled from $\chi(H)$. At each time step, the allocation decisions for each realization of the WCS are sampled from the agent's policy $\pi(\cdot; \theta)$, the plants states evolve according to equations \eqref{eq:ind_plant_dyn} --- \eqref{eq:plant_switched_prob}, and updated fading states are sampled from $\chi(H)$. Every $t_{\max}$ time steps, $N t_{\max}$ transitions are then used to update the policy and value networks according to \eqref{eq:ppo_clip} and \eqref{eq:value_update}, respectively. At the end of each training episode, $NT$ transitions are used to estimate the dual gradient, $\hat{\nabla}_\lambda \mathcal{L}(\theta, \lambda)$, and the dual variable $\lambda$ is updated according to \eqref{eq:dual_update}.

%%%%%%%%%%%%%%%%%%%%%%%%%%%%%%%%%%%%%%%%%%%%%%%%%%%%%%%%%%%
%%%%%%%%%%%%%%%%%%%%%%%% ALGORITHM %%%%%%%%%%%%%%%%%%%%%%%%
%%%%%%%%%%%%%%%%%%%%%%%%%%%%%%%%%%%%%%%%%%%%%%%%%%%%%%%%%%%
\begin{algorithm}[t]
	\caption{Graph RL for WCSs}
	\label{alg:rl}
	\DontPrintSemicolon
	\KwResult{Resource allocation policy.}
	\BlankLine
	\For{$i$ = $1, \dots, E_{\text{RL}}$}{
		\For{ t = 1, \dots, T}{
			Collect transitions from $N$ realizations \;
			\If{$t = k t_{\max}, k = 1, 2, \dots$}{ 
				Update policy network \eqref{eq:ppo_clip}\;
				Update value network \eqref{eq:value_update}\;
			}
			Update dual variable \eqref{eq:dual_update} \;
		}
	}
\end{algorithm}
%%%%%%%%%%%%%%%%%%%%%%%%%%%%%%%%%%%%%%%%%%%%%%%%%%%%%%%%%%%
%%%%%%%%%%%%%%%%%%%%%%%%%%%%%%%%%%%%%%%%%%%%%%%%%%%%%%%%%%%
%%%%%%%%%%%%%%%%%%%%%%%%%%%%%%%%%%%%%%%%%%%%%%%%%%%%%%%%%%%

%%%%%%%%%%%%%%%%%%%%%%%%%%%%%%%%%%%%%%%%%%%%%%%%%%%%%%%%%%%
%%%%%%%%%%%%%%%%%%%%%%%%% REMARK %%%%%%%%%%%%%%%%%%%%%%%%%%
%%%%%%%%%%%%%%%%%%%%%%%%%%%%%%%%%%%%%%%%%%%%%%%%%%%%%%%%%%%
\begin{remark}
Training can be facilitated by initially teaching the allocation policy to imitate a heuristic policy using some imitation learning method, such as the Dataset Aggregation (DAgger) algorithm \cite{pmlr-v15-ross11a}. Once the agent has successfully learned to imitate the heuristic policy, it can then use Algorithm \ref{alg:rl} to improve the resource allocation policy.
\end{remark}
%%%%%%%%%%%%%%%%%%%%%%%%%%%%%%%%%%%%%%%%%%%%%%%%%%%%%%%%%%%
%%%%%%%%%%%%%%%%%%%%%%%%%%%%%%%%%%%%%%%%%%%%%%%%%%%%%%%%%%%
%%%%%%%%%%%%%%%%%%%%%%%%%%%%%%%%%%%%%%%%%%%%%%%%%%%%%%%%%%%

%%%%%%%%%%%%%%%%%%%%%%%%%%%%%%%%
%%% SECTION : Numerical Exp  %%%
%%%%%%%%%%%%%%%%%%%%%%%%%%%%%%%%
\section{Numerical experiments}
\label{sec:num_exp}
%!TEX root = GNNAllocationWCS.tex

%%%%%%%%%%%%%%%%%%%%%%%%%%%%%%%%%
%%%% SECTION : Numerical Exp %%%%
%%%%%%%%%%%%%%%%%%%%%%%%%%%%%%%%%

% ---------------------- Scheduling -------------------- %

\begin{figure*}[htb]
    \centering
    \begin{minipage}{0.33\textwidth}
        \centering
        \includegraphics[width=\linewidth]{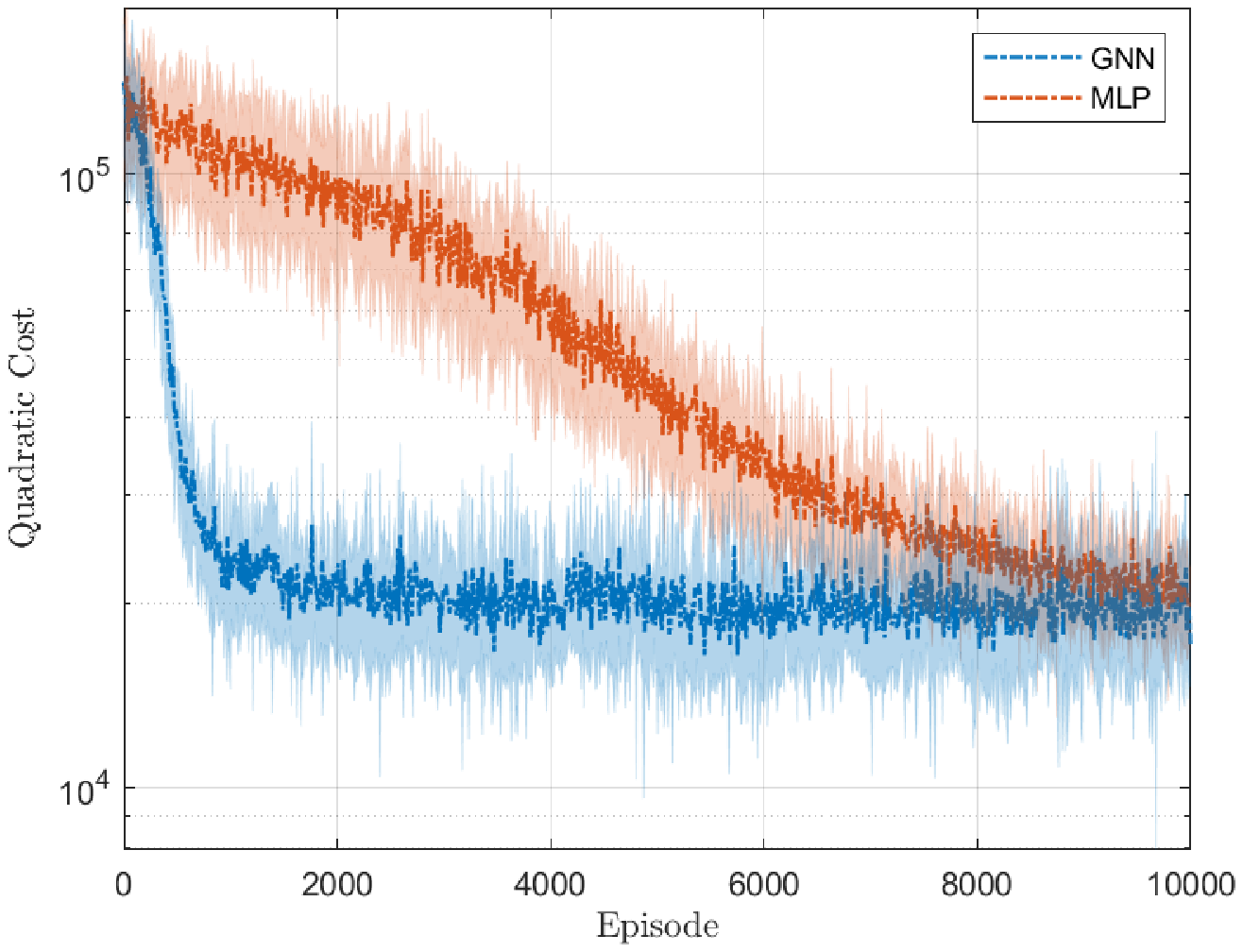}
        \caption{Multi-cellular network: training cost.}
\label{fig:scheduling_training_ppo}
    \end{minipage} 
        \hfill %
        \begin{minipage}{0.33\textwidth}
        \centering
       \includegraphics[width=\linewidth]{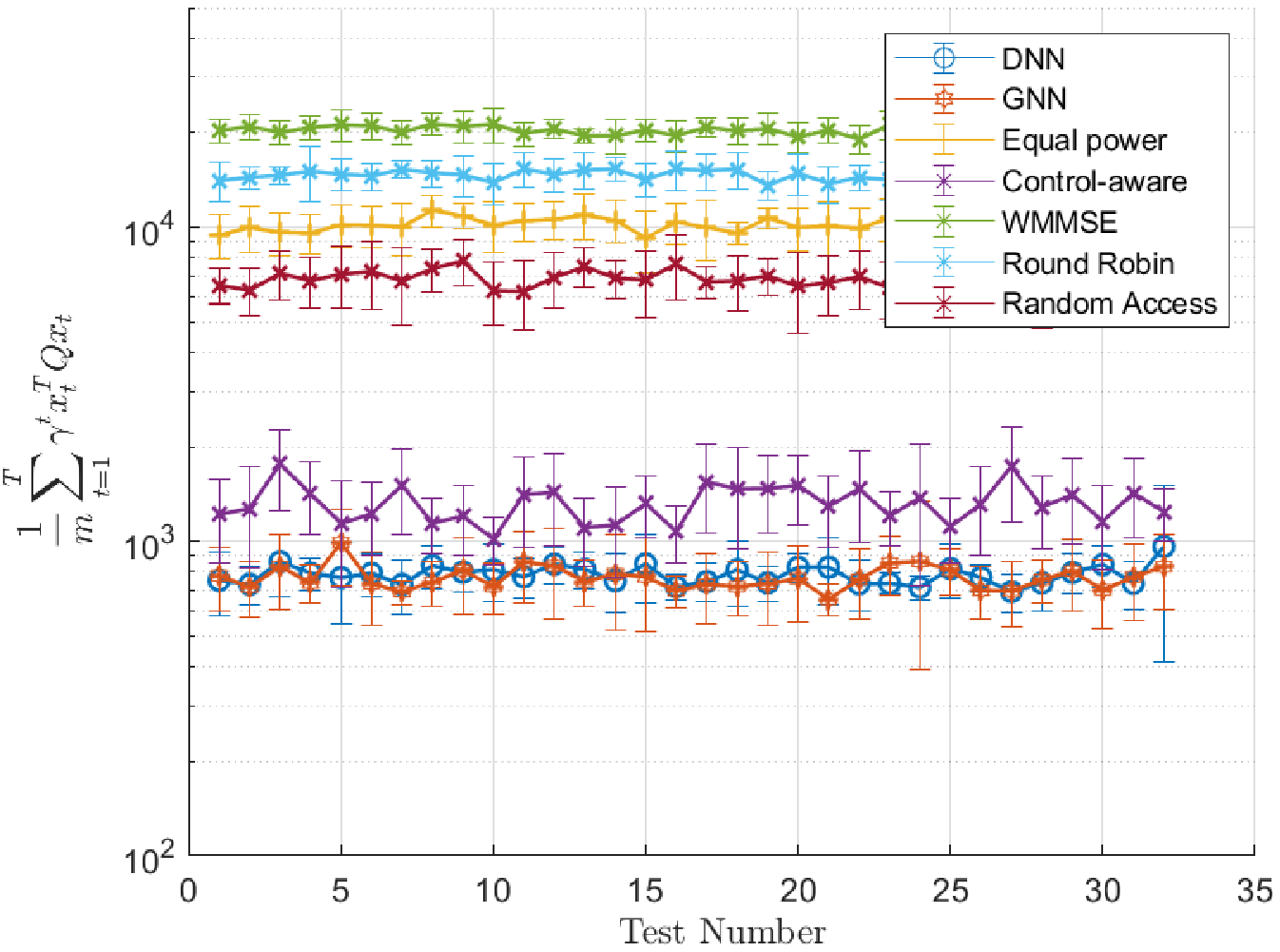}
       \caption{Multi-cellular network: runtime cost.}
\label{fig:scheduling_test_ppo}
    \end{minipage}% 
\hfill %
\begin{minipage}{0.33\textwidth}
	\centering
	\includegraphics[width=\linewidth]{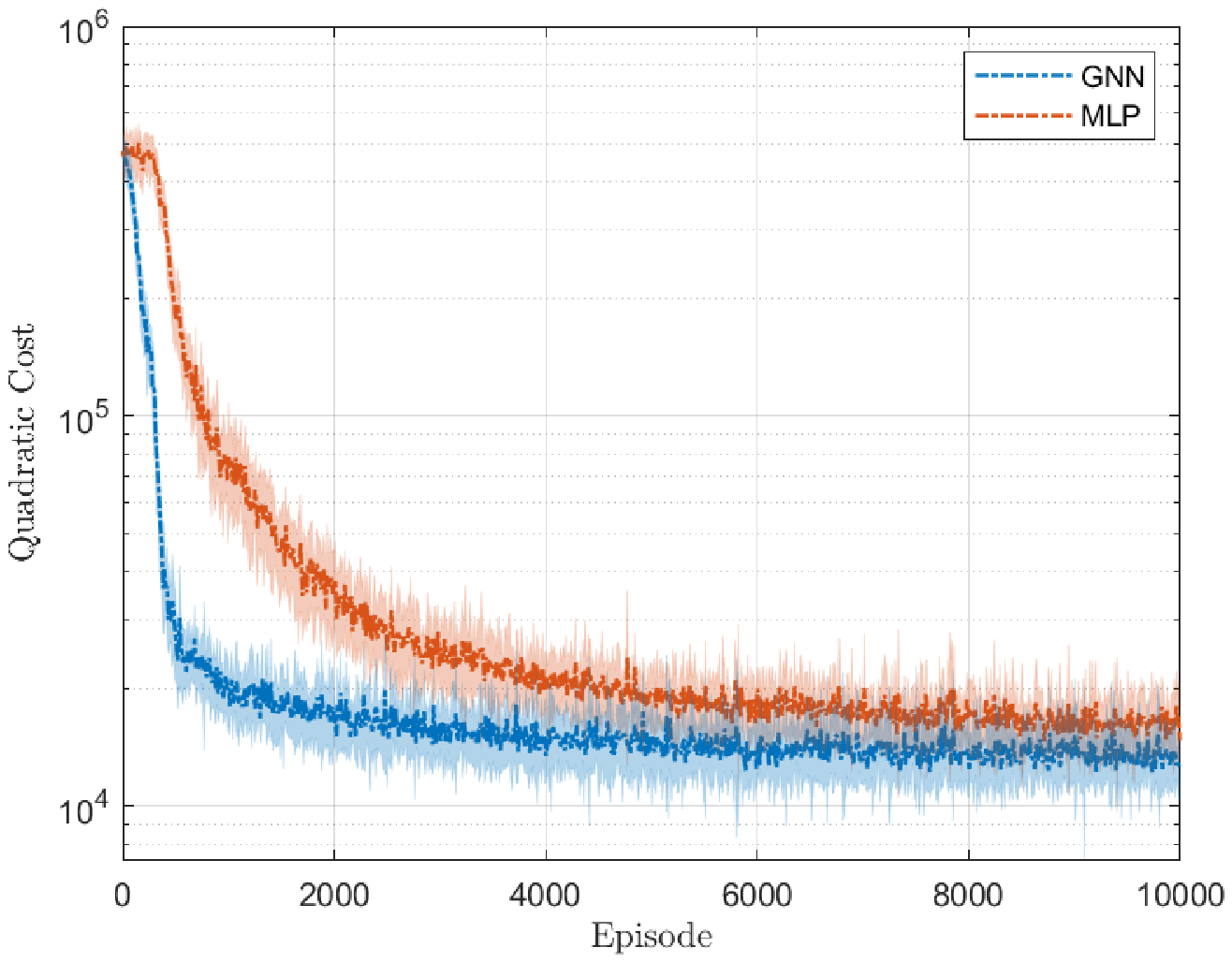}
	\caption{Distributed decisions: training cost.}
	\label{fig:base_scheduling_training_ppo}
\end{minipage}%
\end{figure*}

% ------------------------------------------------------ $

Next we present numerical experiments to compare the performance of the graph RL approach presented in Section \ref{sec:res_allocation_RL} against standard deep RL techniques and heuristics commonly used to design resource allocation policies for WCSs. We created custom environments on OpenAI Gym \cite{brockman2016openai} and customized the Stable Baselines 3 PPO implementation \cite{stable-baselines3} to account for the dual learning step in Algorithm \ref{alg:rl}. Unless indicated otherwise, graph RL policies were parametrized with a REGNN consisting of 3 graph convolutional layers with 5 filter taps and 10 features per layer, and deep RL policies were parametrized with a standard multilayer neural network made up by two hidden layers with 64 hidden units each.  To facilitate training, the resource allocation policies receive the norm of the observations of each plant as inputs.
 Policies were trained over $N = 16$ simultaneous realizations of the WCS, with a discount factor $\gamma = 0.95$, imitation learning rate $\beta_{\text{IL}} = 5 \times 10^{-4}$,  reinforcement learning step size $\beta_{\text{RL}} = 5 \times 10^{-5}$ and dual variable step size $\beta_\lambda = 1 \times 10^{-5}$.  Hyperparameters were  tuned heuristically.

We consider that the plants in the WCS are all linear and open-loop unstable, and have the same dynamics,
\begin{equation}
x_{t + 1}^{(i)} = Ax^{(i)}_t + B u_t^{(i)} + w_t,
\label{eq:linear_plants}
\end{equation}
with
\begin{equation}
A = \begin{bmatrix}
1.05 & 0.2 & 0.2 \\ 0 & 1.05 & 0.2 \\ 0 & 0 & 1.05 
\end{bmatrix}; \quad
B = \mathbb{I}.
\label{eq:matrices_value_power_allocation}
\end{equation}
Plants are spatially  distributed around the remote base stations. The path loss coefficient, $p_l$, and the distance $d_i$ between a plant and its controller determine the slow fading component, $h_s = d_i^{-p_l}$. The fast fading component $h_f$ is randomly sampled from a Rayleigh distribution $\chi(h)$ with parameter $\lambda_h$, leading to
$
h_{ij} = h_s^{(ij)} \times h_f^{(ij)}.
$ Given resource allocation decisions $\alpha$ and fading and interference conditions $H$, the probability of each plant successfully closing its feedback loop --- unknown to the agent --- is given by 
\begin{equation}
v(\xi^{(i)}(\alpha, H)) = 1 - \exp (- \xi^{(i)}(\alpha, H)),
\end{equation}
with $\xi^{(i)}$ the SINR for plant $i$.

\subsection{Multi-Cellular Networks}\label{sec_numericals_cellular}

In this first experiment we consider a multi-cellular network with $n = 5$ base stations and $k = 6$ users per base station. The base stations are evenly spaced in a line, with the distance between consecutive base stations given by $m/n$. The plants are randomly placed around the base stations, with the vertical position of a plant sampled uniformly from an interval $[-k, k]$ around the corresponding base station, and its horizontal position sampled uniformly from an interval $[-m/2n, m/2n]$ around the corresponding base station.  
In this scenario, the resource allocation policy must decide whether to send or not the control signal to a particular plant. The total power budget available at each time step, $mp_0$, is divided equally between the transmitting signals. Here, we take $Q = \mathbb{I}$, $\lambda_h = 2$, $p_l = 1.5$, $p_0 = 2.5$ and $T = 50$. We do not use imitation learning to pre-train the agents in this scenario. The deep RL and the graph RL agents are trained over 10000 episodes each.
Figure \ref{fig:scheduling_training_ppo} shows the mean and standard deviation of the finite horizon cost per training episode for the graph RL (blue) and the deep RL (orange) policies over the $N = 16$ parallel realizations of the WCS. Although both policies achieve a similar performance, the REGNN policy  converges after about 2000 episodes, while the standard deep RL resource allocation policy requires almost 5 times as many episodes to converge. 

%%%%%%%%%%%%%%%%%%%%%%%%%%%%%%%%%%%%%%%%%%%%%%%%%%%%%%%%%%%
%%%%%%%%%%%%%%%%%%%%%%%%% FIGURE %%%%%%%%%%%%%%%%%%%%%%%%%%
%%%%%%%%%%%%%%%%%%%%%%%%%%%%%%%%%%%%%%%%%%%%%%%%%%%%%%%%%%%
% --------- Adhoc constraint: training -------------------%
\begin{figure*}[htb]
	\centering
	\begin{minipage}{0.33\textwidth}
		\centering
		\includegraphics[width=\linewidth]{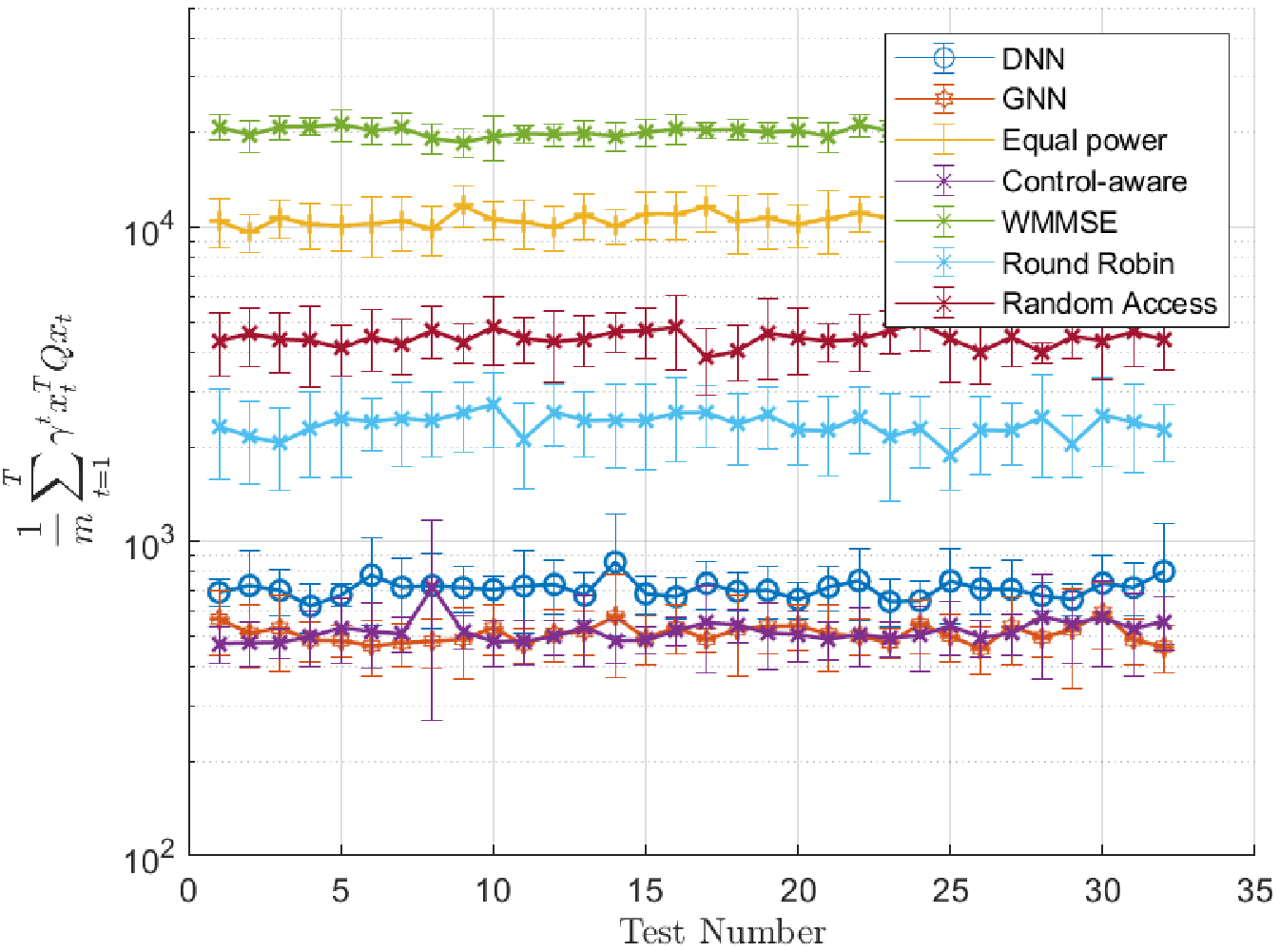}
		\caption{Distributed decisions: runtime cost.}
		\label{fig:base_scheduling_test_ppo}
	\end{minipage}% 
	\hfill %
	\begin{minipage}{0.33\textwidth}
		\centering
		\includegraphics[width=\linewidth]{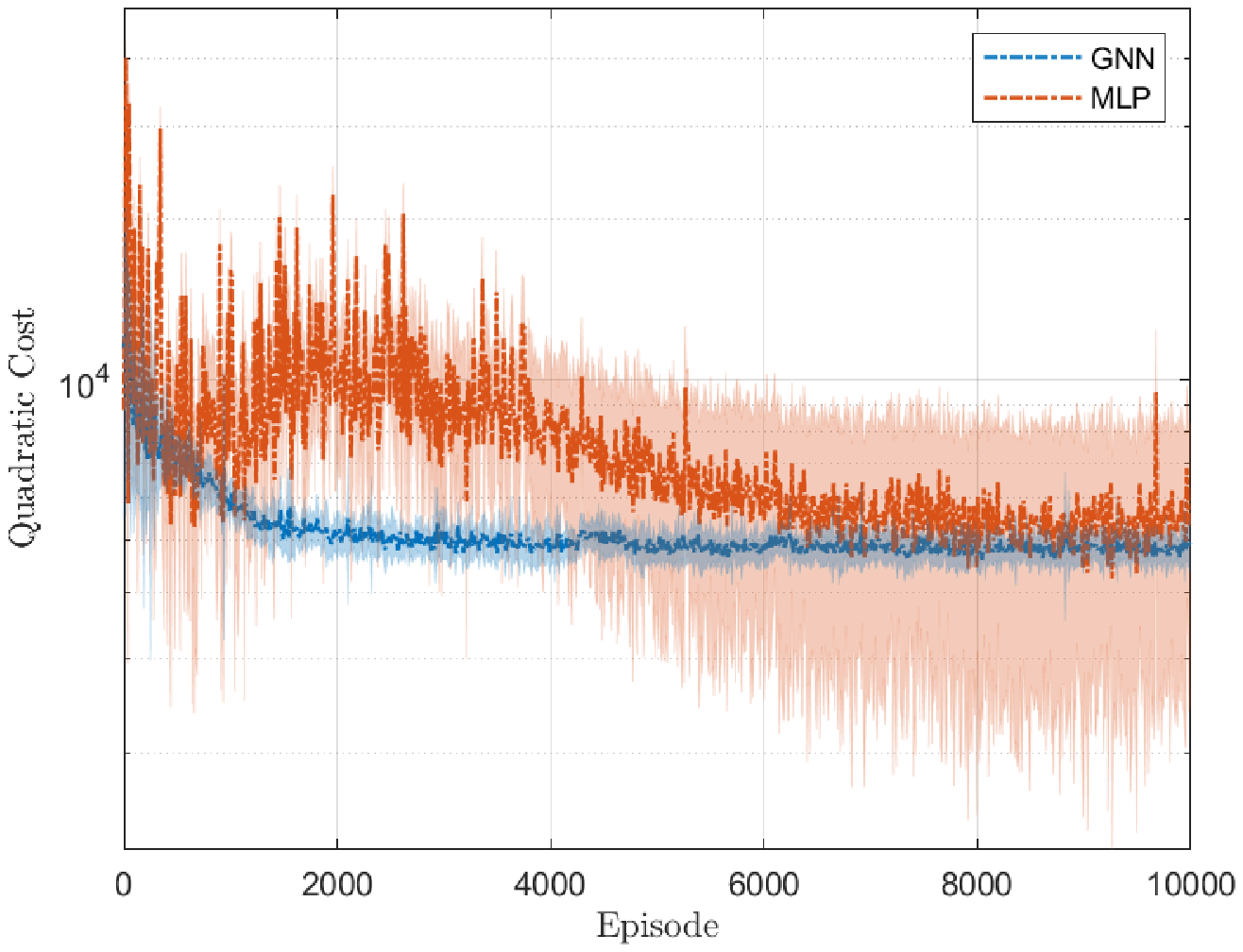}
		\caption{Ad-hoc network: training cost.}
		\label{fig:adhoc_constraint_cost}
	\end{minipage} %
	\hfill %
	\begin{minipage}{0.33\textwidth}
		\centering
		\includegraphics[width=\linewidth]{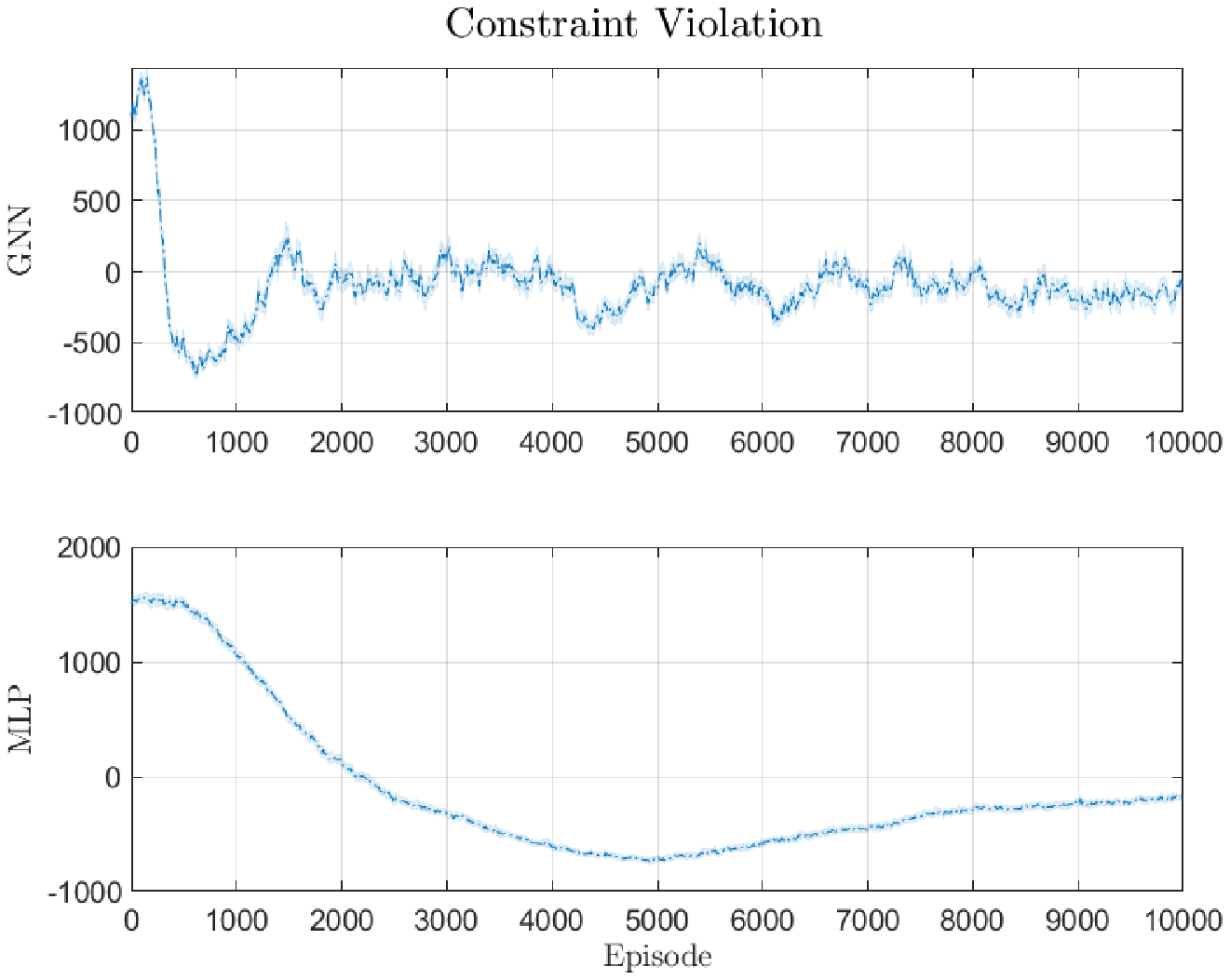}
		\caption{Ad-hoc network: constraint violation.}
		\label{fig:adhoc_constraint_constraintviolation}
	\end{minipage}%
\end{figure*}
%%%%%%%%%%%%%%%%%%%%%%%%%%%%%%%%%%%%%%%%%%%%%%%%%%%%%%%%%%%
%%%%%%%%%%%%%%%%%%%%%%% END FIGURE %%%%%%%%%%%%%%%%%%%%%%%%
%%%%%%%%%%%%%%%%%%%%%%%%%%%%%%%%%%%%%%%%%%%%%%%%%%%%%%%%%%%

After training, we compare the performance of the graph RL and deep RL allocation policies against some commonly used heuristics, namely dividing power equally among the plants; WMMSE,  the weighted minimum mean squared error algorithm \cite{shi_wmmse_2011} --- commonly used to maximize the weighted sum rate (WSR) in wireless channels subject to power constraints --- and standard scheduling policies:
\begin{enumerate}
\item Control-aware: chooses $m/3$ plants currently further away from equilibrium point to transmit;
\item Round robin: schedules  $m/3$ plants per  time instant;
\item Random access: randomly chooses  $m/3$ plants;
\end{enumerate}

To make comparisons between the learned allocation policies and the solutions outlined above fair, the total transmitting power, $mp_0$, is divided equally between the transmitting signals. We consider a longer horizon, $T = 80$, than the one seen by the agents during training. The finite horizon cost achieved by each solution, scaled by the number of plants in the network, $m$, is shown in Figure \ref{fig:scheduling_test_ppo}. Each test point corresponds to the mean over 10 realizations, with each realization using different random seeds. The graph RL and the deep RL policies achieve the same performance in this scenario, and both outperform the heuristic solutions mentioned above. The resource allocation policy parametrized by REGNNs, however, can be trained faster, and is transferable at scale --- as we show in Section \ref{subsec:transference_sims}.

We next consider a scenario similar to the first experiment, but with the allocation decisions now distributed between the base stations. More precisely, each base station now chooses one of the plants in its vicinity to transmit to.  Figure \ref{fig:base_scheduling_training_ppo} shows the mean and standard deviation of the finite horizon cost per training episode for the graph RL (blue) and the deep RL (orange) policies in this scenario. Here, not only does the resource allocation policy parametrized by REGNNs converge faster, but it also achieves a smaller operation cost than the one achieved by the deep RL solution. 
We adapt the heuristic solutions to this scenario, and once again the total transmitting power, $mp_0$, is divided equally between the transmitting signals. We consider a longer horizon, $T = 80$, than the one seen by the agents during training. The finite horizon cost achieved by each solution (scaled by the number of plants in the network) is shown in Figure \ref{fig:base_scheduling_test_ppo}.
In this scenario, the graph RL policy and the control-aware heuristic outperform the other solutions.

%%%%% SUBSECTION %%%%%
\subsection{Ad-Hoc Networks}\label{sec_numericals_adhoc}
\label{subsec:adhoc_sims}
%!TEX root = GNNAllocationWCS.tex

%%%%%%%%%%%%%%%%%%%%%%%%%%%%%%%%%%%%%%%%%%%%%%%%
%%%% SECTION : Numerical Exp --- Allocation %%%%
%%%%%%%%%%%%%%%%%%%%%%%%%%%%%%%%%%%%%%%%%%%%%%%%

We now consider an ad-hoc network with $m = 30$ pairs of plants and remote controllers. Controllers are evenly spaced in a line, with the distance between consecutive controllers equals to 4. Plants are randomly placed in an $[-m/10, m/10]^2$ area around the corresponding controller. Agents aim to stabilize the control plants while respecting a sum-power constraint, %to communicate with the plants, 
\begin{equation}
 \mathbb{E} \left[ \sum_{t=0}^T \gamma^t \left( \sum_{i=1}^m  \alpha^{(i)}_{t} - mp_0\right)  \right] \leq 0.
\end{equation}
In this scenario, $p_0 = 2.5$, $\lambda_h=2$, and $p_l = 1.5$.
Each training episode has a fixed horizon $T = 30$. First, the resource allocation policies are trained to imitate the WMMSE algorithm using the DAgger algorithm \cite{pmlr-v15-ross11a} for a total of 1000 episodes. After the imitation learning phase, we then use reinforcement learning updates to train the resource allocation policies parametrized by REGNNs and by standard neural networks, with the training curves shown in Figures \ref{fig:adhoc_constraint_cost} and \ref{fig:adhoc_constraint_constraintviolation}. The mean and standard deviation of the quadratic objective per training episode is shown in Figure \ref{fig:adhoc_constraint_cost}, where one can see that the graph RL approach converges faster than the deep RL policy. The graph RL resource allocation policy also quickly converges to a feasible solution, as shown in Figure \ref{fig:adhoc_constraint_constraintviolation}.

After training, we next compare the performance of the graph RL and deep RL allocation policies against heuristics adapted to this setting, with each of the scheduling heuristics choosing $m/3$ plants to transmit at each time instant.
To make comparisons fair,  the total transmitting power, $mp_0$, is divided equally between the transmitting signals for the heuristic solutions, and we use a softmax layer to ensure that the total power used by the learned solutions at each time step is equal to $mp_0$. We consider a longer horizon, $T = 80$, than the one seen by the agents during training. The finite horizon cost achieved by each solution, scaled by the number of plants in the network, $m$, is shown in Figure \ref{fig:adhoc_constraint_test_cost}. The REGNN policy outperforms both the deep RL and the heuristic solutions. 

%%%%%%%%%%%%%%%%%%%%%%%%%%%%%%%%%%%%%%%%%%%%%%%%%%%%%%%%%%%
%%%%%%%%%%%%%%%%%%%%%%%%% FIGURE %%%%%%%%%%%%%%%%%%%%%%%%%%
%%%%%%%%%%%%%%%%%%%%%%%%%%%%%%%%%%%%%%%%%%%%%%%%%%%%%%%%%%%
% ----------- Adhoc constraint: training 60 plants -----------------%
\begin{figure*}[htb]
    \centering
    \begin{minipage}{0.32\textwidth}
		\centering
		\includegraphics[width=\linewidth]{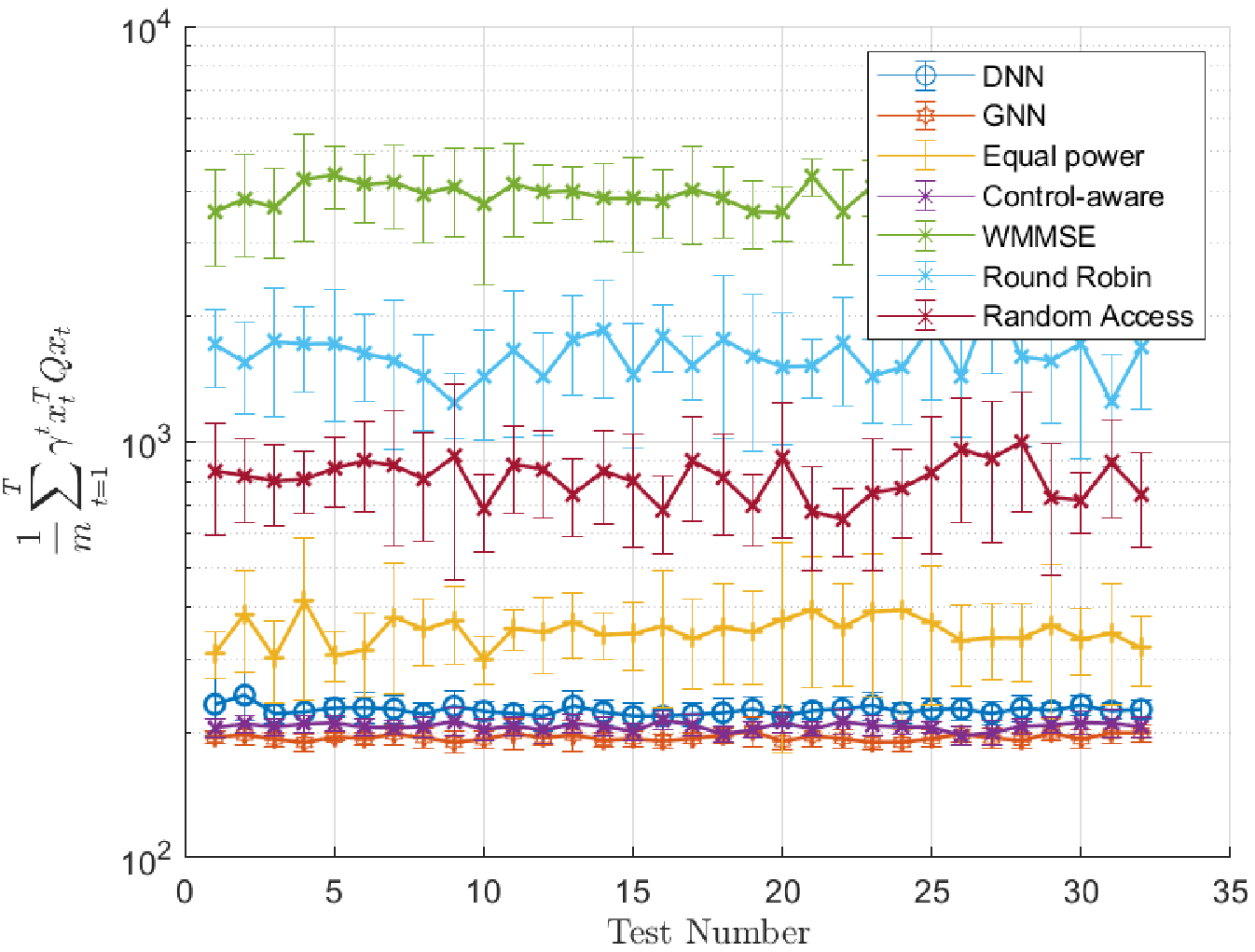}
		\caption{Ad-hoc network: runtime cost.}
		\label{fig:adhoc_constraint_test_cost}
	\end{minipage} %
    \hfill %
    \begin{minipage}{0.32\textwidth}
        \centering
    	\includegraphics[width=\linewidth]{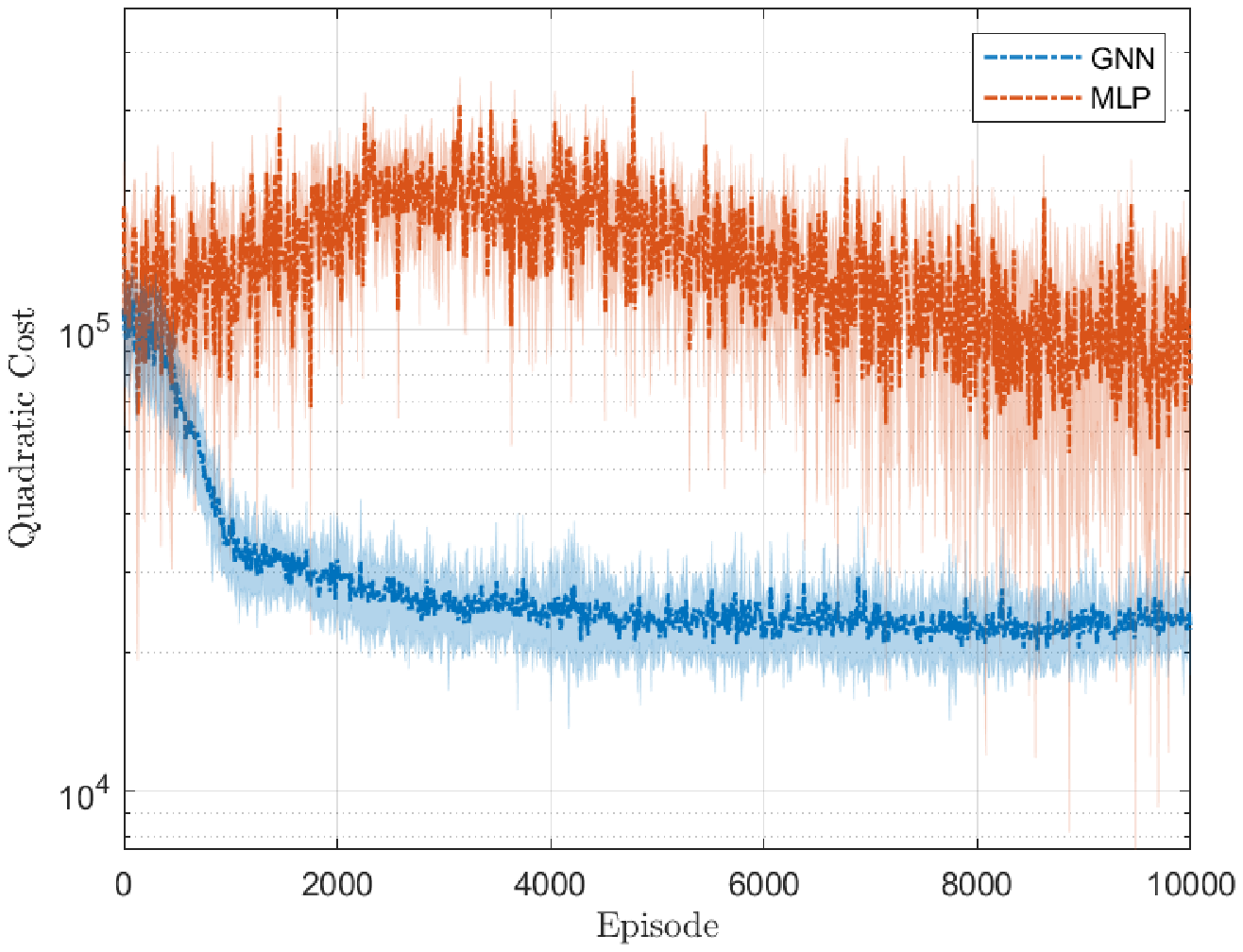}
    	\caption{Larger network: training cost.}
		\label{fig:adhoc_constraint_cost_60p}
    \end{minipage} %
	\hfill %
	\begin{minipage}{0.32\textwidth}
    	\centering
    	\includegraphics[width=\linewidth]{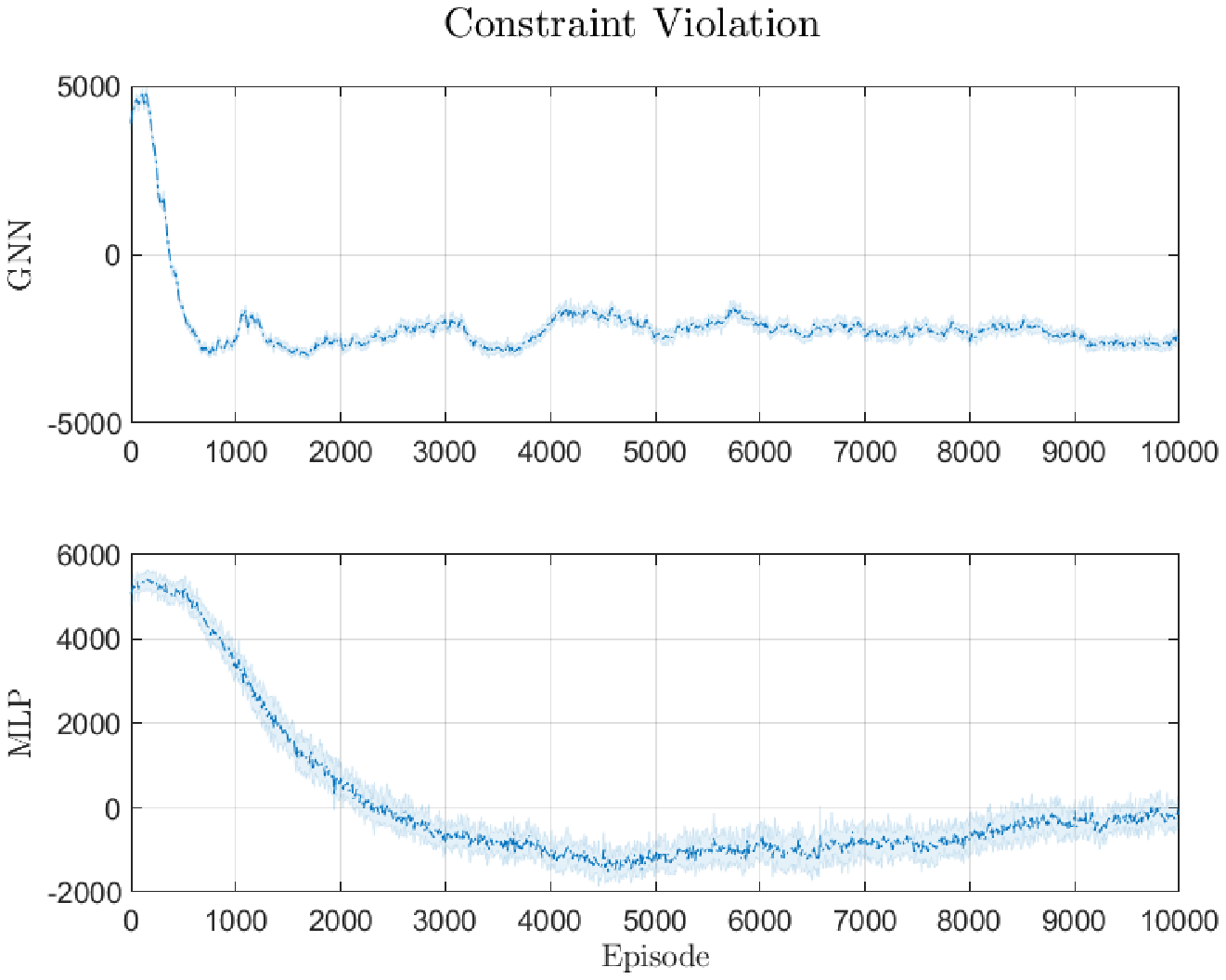}
    	\caption{Larger network: constraint violation.}
		\label{fig:adhoc_constraint_constraintviolation_60p}
    \end{minipage}% 
\end{figure*}
%%%%%%%%%%%%%%%%%%%%%%%%%%%%%%%%%%%%%%%%%%%%%%%%%%%%%%%%%%%
%%%%%%%%%%%%%%%%%%%%%%% END FIGURE %%%%%%%%%%%%%%%%%%%%%%%%
%%%%%%%%%%%%%%%%%%%%%%%%%%%%%%%%%%%%%%%%%%%%%%%%%%%%%%%%%%%
For a larger network with $m = 60$ pairs of plants and remote controllers, the performance of the deep RL approach degrades, as shown in Figures \ref{fig:adhoc_constraint_cost_60p} --- \ref{fig:adhoc_constraint_statetraj_60p}. As in the previous scenario, the remote controllers are evenly spaced in a line, with the distance between consecutive remote controllers equals to 4. Plants are randomly placed in an $[-m/10, m/10]^2$ area around the corresponding controller and we have $p_0 = 5$, $\lambda_h=2$, and $p_l = 1.5$. As shown in Figures \ref{fig:adhoc_constraint_cost_60p} and \ref{fig:adhoc_constraint_constraintviolation_60p}, both approaches still find feasible allocation policies, but the resource allocation policy parametrized with REGNNs achieves a better training objective than the resource allocation policy parametrized with standard neural networks. The runtime simulations (Figures \ref{fig:adhoc_constraint_testcost_60p} and \ref{fig:adhoc_constraint_statetraj_60p}) also consider a longer horizon ($T = 80$) than the one seen during training ($T = 30$). The graph RL policy achieves a smaller cost of operation than the deep RL and heuristic solutions, as shown in Figure \ref{fig:adhoc_constraint_testcost_60p}. Each test point  in Figure \ref{fig:adhoc_constraint_testcost_60p} shows the mean and standard deviation of the overall quadratic cost (scaled by the number of plants) over 10 realizations under different random seeds.
As expected from the smaller cost of operation, the GNN allocation policy maintains the plants closer to the equilibrium point, as the trajectory of control states under the graph RL, deep RL, WMMSE and control-aware resource allocation policies for one of the test realizations, in Figure \ref{fig:adhoc_constraint_statetraj_60p}, shows. 

%%%%%%%%%%%%%%%%%%%%%%%%%%%%%%%%%%%%%%%%%%%%%%%%%%%%%%%%%%%
%%%%%%%%%%%%%%%%%%%%%%%%% FIGURE %%%%%%%%%%%%%%%%%%%%%%%%%%
%%%%%%%%%%%%%%%%%%%%%%%%%%%%%%%%%%%%%%%%%%%%%%%%%%%%%%%%%%%
% ----------- Adhoc constraint: training 60 plants -----------------%
\begin{figure*}[tb]
    \centering
    \begin{minipage}{0.32\textwidth}
		\centering
		\includegraphics[width=\linewidth]{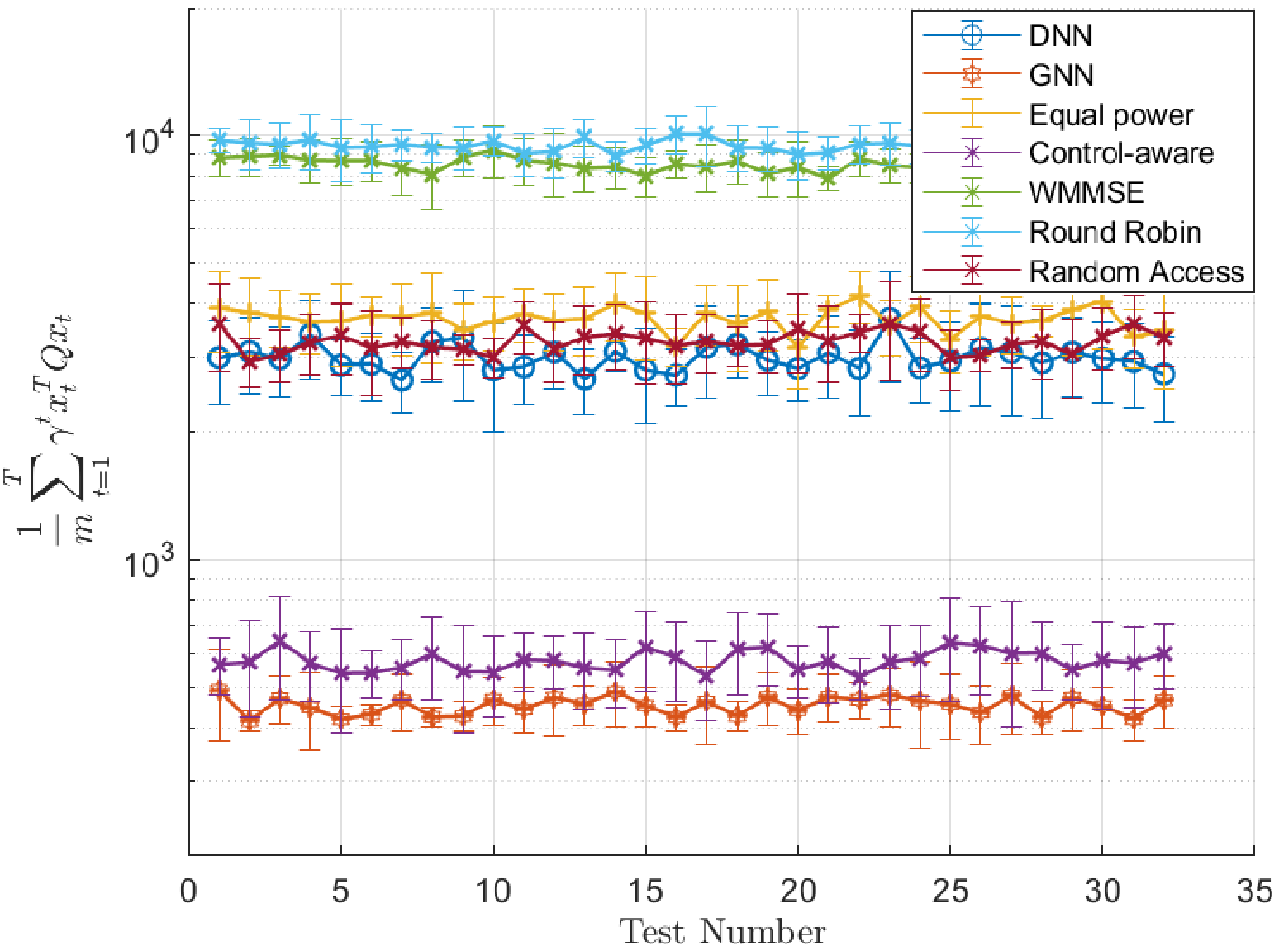}
		\caption{Larger network: runtime cost.}
		\label{fig:adhoc_constraint_testcost_60p}
	\end{minipage}%  
    \hfill %
    \begin{minipage}{0.32\textwidth} 
		\centering
		\includegraphics[width=\linewidth]{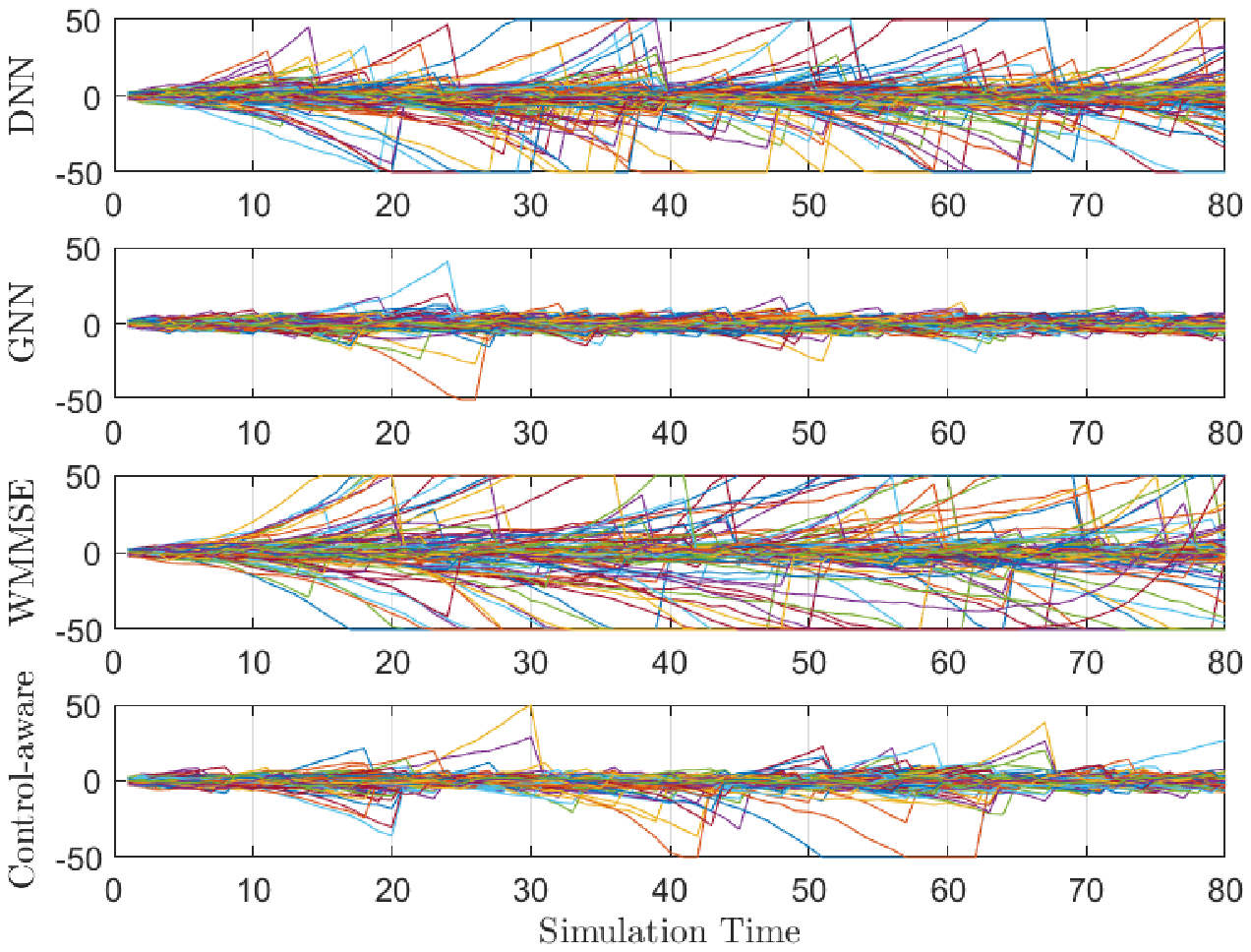}
		\caption{Trajectory of control states.}
		\label{fig:adhoc_constraint_statetraj_60p}
	\end{minipage}  %
	\hfill %
	\begin{minipage}{0.32\textwidth}
		\centering
		\includegraphics[width=\linewidth]{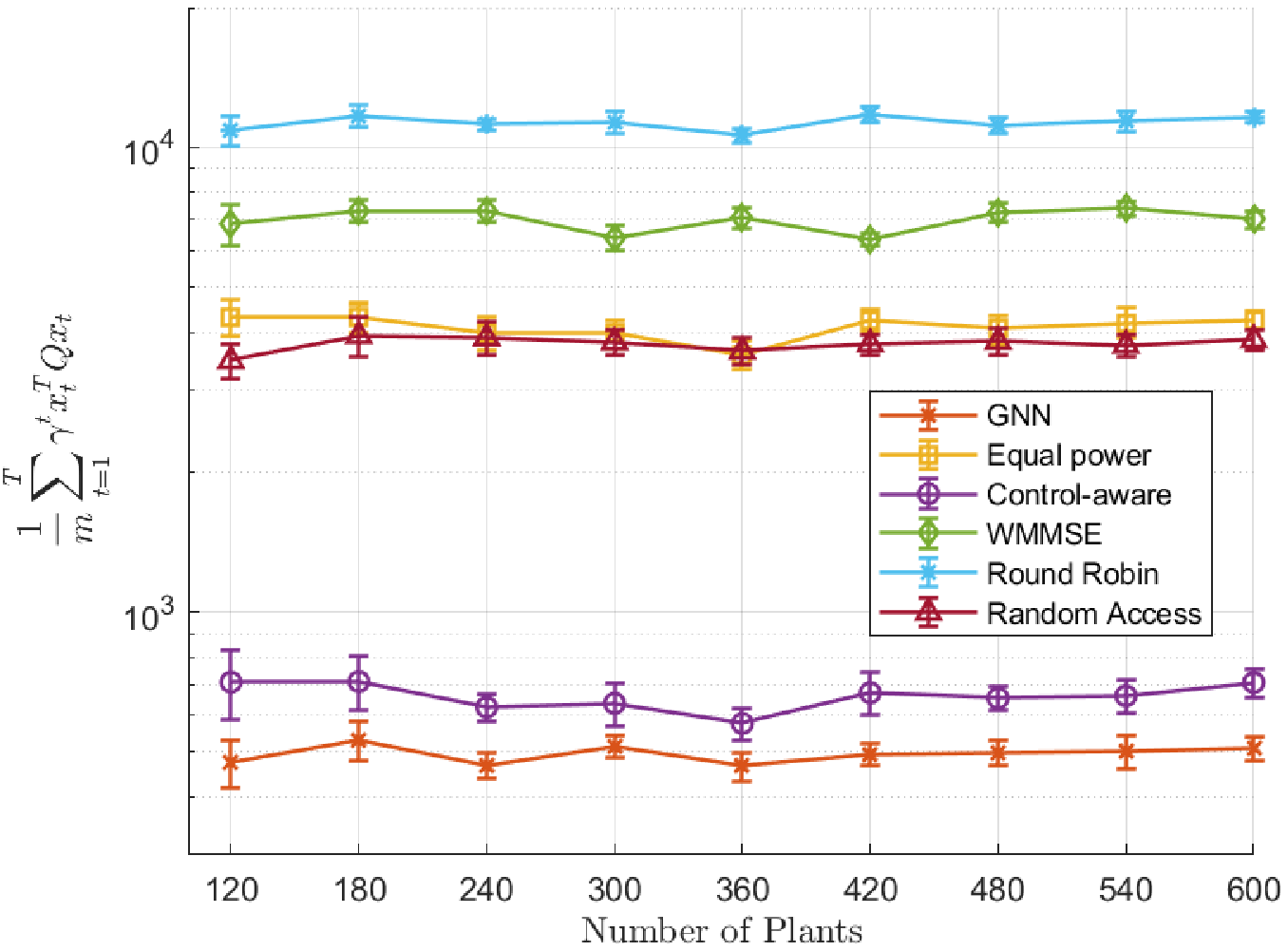}
		\caption{Ad-hoc network: transference test.}
		\label{fig:adhoc_constraint_test_transf_cost}
	\end{minipage} %
\end{figure*}
%%%%%%%%%%%%%%%%%%%%%%%%%%%%%%%%%%%%%%%%%%%%%%%%%%%%%%%%%%%
%%%%%%%%%%%%%%%%%%%%%%% END FIGURE %%%%%%%%%%%%%%%%%%%%%%%%
%%%%%%%%%%%%%%%%%%%%%%%%%%%%%%%%%%%%%%%%%%%%%%%%%%%%%%%%%%%

%%%%% SUBSECTION %%%%%
\subsection{Transference}
\label{subsec:transference_sims}
%!TEX root = GNNAllocationWCS.tex

%%%%%%%%%%%%%%%%%%%%%%%%%%%%%%%%%%%%%%%%%%%%%%%%%%
%%%% SECTION : Numerical Exp --- Transference %%%%
%%%%%%%%%%%%%%%%%%%%%%%%%%%%%%%%%%%%%%%%%%%%%%%%%%

The trainable parameters of a REGNN correspond to the filter taps used to modulate the signal received from successive $k$-hop neighborhoods of each node, as seen in equations \eqref{eq:graph_conv} - \eqref{eq:gnn_multilayer}. Given a time-varying GSO $S_t$, the filter taps of a REGNN do not depend on the number of users in the network, and can thus be transferred to networks of varying size --- as long as the number of features of the input signal remains the same. That implies that one can train a resource allocation policy on a certain network, and transfer the learned parameters to be executed on a larger network. To evaluate how well the performance of a REGNN resource allocation policy transfers \emph{at scale} to larger networks, we revisit the ad-hoc scenario with $m = 60$ pairs of plants and remote controllers, and draw successively larger ad-hoc networks with controllers evenly spaced in a line and the distance between consecutive remote controllers equals to 4. The plants are randomly place in an $[-6, 6]^2$ area around the corresponding controller, and we take  $p_0 = 5$, $\lambda_h=2$, and $p_l = 1.5$, as in the previous scenario. We then compare the performance of the REGNN policy against the heuristic solutions over a runtime horizon $T = 80$, and present the simulation results in Figure \ref{fig:adhoc_constraint_test_transf_cost}. Each test point shows the mean and standard deviation of 20 realizations of the WCS over a network of a certain size. As shown in Figure \ref{fig:adhoc_constraint_test_transf_cost}, the REGNN allocation policy consistently outperforms the heuristic solutions on networks up to 10 times larger the network over which it was trained.
Note that a standard neural network is not transferable at scale, hence we do not evaluate its performance in this scenario.

%%%%%%%%%%%%%%%%%%%%%%%%%%%%%
%%% SECTION : Conclusion  %%%
%%%%%%%%%%%%%%%%%%%%%%%%%%%%%
\section{Conclusion}
%!TEX root = GNNAllocationWCS.tex

%%%%%%%%%%%%%%%%%%%%%%%%%%%%%%
%%%% SECTION : Conclusion %%%%
%%%%%%%%%%%%%%%%%%%%%%%%%%%%%%

This paper presents a constrained graph reinforcement learning approach to design resource allocation policies for large-scale wireless control systems. Resource allocation problems are challenging, even in the absence of policies tailored to the specific application in mind, such as wireless control systems. To tackle this problem and learn or improve upon allocation policies, previous works have successfully relied on deep learning and deep reinforcement learning. As neural networks are made up of successive linear computational layers followed by pointwise nonlinearities, however, those approaches may fail to scale. In this paper, we then propose the use of reinforcement learning and \emph{graph} neural networks to design feasible, scalable resource allocation policies for wireless control systems. Extensive numerical experiments demonstrate that the proposed approach yields resource allocation policies that routinely outperform deep RL and heuristic solutions, and are transferable across networks of varying size. 

%%%%%%%%%%%%%%%%%%%%%%%%%%%%%
%%% SECTION : Appendices  %%%
%%%%%%%%%%%%%%%%%%%%%%%%%%%%%

\appendices
\section{Proof of Proposition \ref{prop:cost_invariance} }
\label{app:proof_prop2}
%!TEX root = GNNAllocationWCS.tex

%%%%%%%%%%%%%%%%%%%%%%%%%%%
%%%% Appendix : Proof of cost invariance %%%%
%%%%%%%%%%%%%%%%%%%%%%%%%%%

The invariance of the optimal cost is a direct consequence of the equivariance of the resource allocation policy and of the invariance of the cost and constraint functions. For the permuted system defined over the wireless network described by the probability distribution $\hat{\chi}(\hat{H})$, the cost is given by
\begin{equation}
\begin{aligned}
J_{\hat{\chi}(\hat{H})}(\theta) &= \mathbb{E}^{\hat{\alpha}(\hat{x}, \hat{H}; \theta)} \left[ \sum_{t = 0}^T \gamma^t c(\hat{x}_t) \right] \\
&= \int_{(\hat{\mathcal{S}} \times \hat{\mathcal{A})}^T} \left( \sum_{t = 0}^T \gamma^t c(\hat{x}_t)   \right) \hat{p}_{\hat{\alpha}}(\mathbf{\hat{s}}, \mathbf{\hat{\alpha}}) d \mathbf{\hat{s}} d \mathbf{\hat{\alpha}},
\end{aligned}
\label{eq:theorem_proof_cost1}
\end{equation}
with $\mathbf{\hat{s}} = (\hat{s}_0, \hat{s}_1, \dots)$ and $\mathbf{\hat{\alpha}} = (\hat{\alpha}_0, \hat{\alpha}_1, \dots)$. Since the resource allocation policy is assumed to be Markovian, it follows that 
\begin{multline}
\hat{p}_{\hat{\alpha}}(\mathbf{\hat{s}}, \mathbf{\hat{\alpha}})
= \prod_{u = 1}^T \hat{p} (\hat{s}_u |\hat{s}_{u - 1}, \hat{\alpha}_{u-1} ) \pi(\hat{\alpha}_u |\tilde{ \hat{s}}_u) \hat{p}_o (\tilde{ \hat{s}}_u|  \hat{s}_u) \\ \hat{p}(\hat{s}_0) \pi(\hat{\alpha}_0|\hat{s}_0) \hat{p}_o (\tilde{ \hat{s}}_0|  \hat{s}_0),
\label{eq:joint_distribution_1}
\end{multline}
with $\hat{p}(\hat{s}_0)$ standing for the distribution of the initial state,  $\hat{p} (\hat{s}_u |\hat{s}_{u - 1}, \hat{\alpha}_{u-1} )$ representing a one-step transition of the system, $\hat{p}_o (\tilde{ \hat{s}}_u|  \hat{s}_u)$ standing for the distribution of the observation noise and $\tilde{ \hat{s}}_u$ the observation received by the agent.

Now introduce the change of variables $\hat{x} = P^\trp x, \hat{H} = P^\trp H P$. To simplify the notation, let $s_t = [x_t, H_t]$ and, with a slight abuse of notation, $P^\trp s_t = [P^\trp x_t, P^\trp H_t P]$ in the following. 
Since the wireless network characterizing the permuted system satisfies \eqref{eq:inv_wireless_prob} and the permutation matrices are orthogonal, we have that
\begin{equation}
\hat{p}(\hat{s}_0) = \hat{p}(P^\trp s_0) = p(s_0).
\end{equation}
Similarly, it follows from the assumption on the permutation equivariance of the resource allocation policy that
\begin{equation}
\pi(\hat{\alpha}_t | \tilde{\hat{s}}_t) = \pi(P^\trp \alpha_t | P^\trp \tilde{s}_t) = \pi(\alpha_t|\tilde{s}_t).
\end{equation}
Under Assumption \ref{as:iid_obs_noise},
\begin{equation}
\hat{p}_o (\tilde{ \hat{s}}_u|  \hat{s}_u) = p_o  (\tilde{ s}_u| s_u).
\end{equation}
The kernel $\hat{p}(\hat{s}_t|\hat{s}_{t-1}, \hat{\alpha}_{t-1}) \, : \, \hat{\mathcal{S}} \times (\hat{\mathcal{S}},\hat{\mathcal{A}}) \to [0,1]$ representing a one-step transition of the permuted WCS is given by
\begin{equation}
\begin{aligned}
\hat{p}(\hat{s}_t|\hat{s}_{t-1}, \hat{\alpha}_{t-1}) &= \hat{p} (\hat{x}_t, \hat{H}_t| \hat{x}_{t-1}, \hat{H}_{t-1}, \hat{\alpha}_{t-1}) \\
&= \hat{p} (\hat{x}_t| \hat{x}_{t-1}, \hat{H}_{t-1}, \hat{\alpha}_{t-1}) \hat{\chi}(\hat{H}_t) \\
\end{aligned}
\label{eq:transition_kernel_00}
\end{equation}
since $\hat{H}_t$ is independent of $\hat{x}_{t-1}, \hat{H}_{t-1}$, and $ \hat{\alpha}_{t-1}$, and thus,
\begin{equation}
\begin{aligned}
&\hat{p}(\hat{s}_t|P^\trp s_{t-1}, P^\trp \alpha_{t-1}) \\
&= \hat{p} (\hat{x}_t| P^\trp x_{t-1}, P^\trp H_{t-1} P, P^\trp \alpha_{t-1}) \hat{\chi}(P^\trp H_t P) \\
&= \hat{p} (\hat{x}_t| P^\trp x_{t-1}, P^\trp H_{t-1} P, P^\trp \alpha_{t-1}) \chi(H_t)
\end{aligned}
\label{eq:transition_kernel_01}
\end{equation}
by \eqref{eq:inv_wireless_prob}. Now, let 
\[\hat{f}(\hat{x},\hat{u}) \coloneqq [\hat{f}^{(1)}(\hat{x}^{(1)}, \hat{u}^{(1)}), \dots, \hat{f}^{(m)}(\hat{x}^{(m)}, \hat{u}^{(m)})]
\] 
aggregate the dynamics of individual plants given the control signals received under the communication model \eqref{eq:plant_switched_prob},
 $\mathbb{I}_{X}$ represent the indicator function for a region $X \subset \mathcal{X}$, and $\hat{\xi} = \xi(P^\trp \alpha, P^\trp H P) = P^\trp \xi(\alpha, H)$. Let also 
$\hat{p}_{u} (\hat{u}|\hat{\xi})$ represent the distribution of the received control signal $\hat{u}$ given the effective SINR $\hat{\xi}$, and $\hat{p}_w(\hat{w})$ represent the distribution of the control plants noise for the permuted system --- and accordingly, $p_{u} (u|\xi), p_w(w)$ for the unpermuted one. Then,
\begin{equation}
\begin{aligned}
&\hat{p}(\hat{x}_{t+1} \in P^\trp X|P^\trp s_{t}, P^\trp \alpha_{t}) \\
&= \mathbb{E} \left[ \mathbb{I}_{P^\trp X} \left( \hat{f}(P^\trp x_{t}, \hat{u}_t )  + \hat{w}  \right) \right] \\
&=  \int  \mathbb{I}_{P^\trp X} \left( \hat{f}(P^\trp x_{t}, \hat{u}_t)  + \hat{w}   \right) \hat{p}_{u} (\hat{u}|\hat{\xi})  \hat{p}_w(\hat{w}) d \hat{u} d\hat{w} \\
&=  \int  \mathbb{I}_{P^\trp X} \left( \hat{f}(P^\trp x_{t}, P^\trp u_t)  + P^\trp w  \right) \\ &\qquad \qquad \hat{p}_u (P^\trp u|P^\trp \xi) \hat{p}_{w}(P^\trp w) d u dw \\
%=  \int  \mathbb{I}_{P^\trp X} \left(P^\trp f(x_{t},u_t)  + P^\trp w  \right) \hat{p}_u (P^\trp u|P^\trp \xi) \hat{p}_w(P^\trp w) d u d w \\
&=  \int  \mathbb{I}_{X} \left(f(x_{t}, u_t)  + w  \right) p_u (u|\xi) p_w(w) d u dw  \\
&= p(x_{t+1} \in X|s_{t}, \alpha_{t}) 
\end{aligned}
\label{eq:Q_hat_1}
\end{equation}
 since the collection of Bernoulli random variables representing the packet drops over the communication channels for the WCS are independent, the control policies are all the same according to Assumption \ref{as:similar_dyn}, and the permutation matrices are orthogonal. 
Combining \eqref{eq:joint_distribution_1} --- \eqref{eq:Q_hat_1}, 
\begin{equation}
\hat{p}_{\hat{\alpha}}( \mathbf{\hat{s}}, \mathbf{\hat{\alpha}}) 
= p_{\alpha} (\mathbf{s}, \mathbf{\alpha})
\end{equation}
and thus
\begin{multline}
J_{\hat{\chi}(\hat{H})} (\theta)  
= \int_{(\hat{\mathcal{S}} \times \hat{\mathcal{A})}^T} \left( \sum_{t = 0}^T \gamma^t c(\hat{x}_t)   \right) \hat{p}_{\hat{\alpha}}(\mathbf{\hat{s}}, \mathbf{\hat{\alpha}}) d \mathbf{\hat{s}} d \mathbf{\hat{\alpha}}, \\
= \int_{(\mathcal{S} \times \mathcal{A})^T} \left( \sum_{t = 0}^T \gamma^t c(P^\trp x_t)   \right) p_{\alpha}(\mathbf{s}, \mathbf{\alpha}) d \mathbf{s} d \mathbf{\alpha}  \\
= \int_{(\mathcal{S} \times \mathcal{A})^T} \left( \sum_{t = 0}^T \gamma^t c( x_t)    \right) p_{\alpha}(\mathbf{s}, \mathbf{\alpha}) d \mathbf{s} d \mathbf{\alpha} \\
= J_{\chi(H)} (\theta)
\label{eq:theorem_proof_cost2}
\end{multline}
by the invariance of the one-step cost in Assumption \ref{as:one-step_invariant}.
 Now let $L_{\chi(H)} (\theta)$ the constraint achieved by the resource allocation policy. Following similar arguments, we have
\begin{multline}
L_{\hat{\chi}(\hat{H})}(\theta) = \mathbb{E}^{\hat{\alpha}(\hat{x}, \hat{H}; \theta)} \left[ \sum_{t = 0}^T \gamma^t l(\hat{\alpha}_t(\hat{x}_t, \hat{H}_t; \theta)) \right]  \\
= \int_{(\hat{\mathcal{S}} \times \hat{\mathcal{A})}^T} \left( \sum_{t = 0}^T \gamma^t l(\hat{\alpha}_t)   \right) \hat{p}_{\hat{\alpha}}(\mathbf{\hat{s}}, \mathbf{\hat{\alpha}}) d \mathbf{\hat{s}} d \mathbf{\hat{\alpha}} \\
= \int_{(\mathcal{S} \times \mathcal{A})^T} \left( \sum_{t = 0}^T \gamma^t l(P^\trp \alpha_t)   \right) p_{\alpha}(\mathbf{s}, \mathbf{\alpha}) d \mathbf{s} d \mathbf{\alpha} \\
= L_{\chi(H)} (\theta).
\label{eq:theorem_proof_constraint_1}
\end{multline}
 
 Finally, note that the relations above hold for any parameterization that renders the resource allocation policy equivariant under permutations. In particular, it also holds for the parameter set %$\theta^*$
 that solves \eqref{eq:constrained_optimal_prob}, from which \eqref{eq:invariant_opt-cost} follows.

%%%%%%%%%%%%%%%%%%%%%%%%%%%%%%%
%%% SECTION : Bibliography  %%%
%%%%%%%%%%%%%%%%%%%%%%%%%%%%%%%
\bibliographystyle{IEEEtran}
\bibliography{wl_control}

\end{document}